\renewcommand{\tilde}{\widetilde}
\renewcommand{\hat}{\widehat}
\newtheorem{prop}{Proposition}[section]
\newtheorem{definition}[prop]{Definition}
\newtheorem{theorem}[prop]{Theorem}
\newtheorem{corollary}[prop]{Corollary}
\renewcommand{\d}{\partial}
\newcommand{\RR}{\mathbb{R}}
\def\cG{\mathcal{G}}
\def\cH{\mathcal{H}}
\numberwithin{equation}{section} \makeatletter
\begin{document}

\def\mytitle{Hamiltonian surface charges using external sources.}

\pagestyle{myheadings} \markboth{\textsc{\small Troessaert}}{%
  \textsc{\small Hamiltonian surface charges using external sources.}}
  \addtolength{\headsep}{4pt}

\begin{centering}

  \vspace{1cm}

  \textbf{\Large{\mytitle}}



  \vspace{1.5cm}

  {\large C\'edric Troessaert$^a$}

\vspace{.5cm}

\begin{minipage}{.9\textwidth}\small \it \begin{center}
   Centro de Estudios Cient\'ificos (CECs)\\ Arturo Prat 514,
   Valdivia, Chile \\ troessaert@cecs.cl \end{center}
\end{minipage}

\end{centering}

\vspace{1cm}

\begin{center}
  \begin{minipage}{.9\textwidth}
    \textsc{Abstract}. 
    In this work, we interpret part of the boundary conditions as external
    sources in order to solve the integrability problem present in
    the computation of surface charges associated to gauge symmetries in the
    hamiltonian formalism. We start by describing the hamiltonian structure of
    external symmetries preserving the action up to a transformation of the
    external sources of the theory. We then extend these results to the
    computation of surface charges for field theories with non-trivial
    boundary conditions.
    \end{minipage}
\end{center}

\vfill

\noindent
\mbox{}
{\scriptsize$^a$Laurent Houart postdoctoral fellow.}

\thispagestyle{empty}
\newpage

\begin{small}
{\addtolength{\parskip}{-1.5pt}
 \tableofcontents}
\end{small}
\newpage

\section{Introduction}
\label{sec:introduction}

In field theories, associating conserved generators to gauge symmetries is a
long standing problem. The main issue is that the bulk part of the generator
will be proportional to the constraints of the theory: evaluated on solutions
the associated conserved quantity will be zero. At first sight, this does not
seem so bad. However, it means that, for instance, the notion of electric
charge in Maxwell's theory or energy in Einstein's theory both disappear.

In hamiltonian formalism, a partial solution to the problem has been
developped in \cite{regge_role_1974} and \cite{brown_poisson_1986}. The idea is that
generators must be supplemented with a boundary term in order to be
differentiable. The on-shell value of this boundary term is then the associated
conserved quantity. Applying this to Einstein's theory in 4D, one recovers the
ADM value for the mass \cite{regge_role_1974,arnowitt_republication_2008}.

When applying this idea to compute conserved quantities, one has to select a
suitable set of boundary conditions. This choice is the key factor: if the
boundary conditions are too restrictive, this will reduce the set of available
symmetries; if the boundary conditions are too lax, the differentiability
condition on the generators will be too strong and the set of symmetries for
which we can associate conserved quantities will be small. Looking for a good
set of boundary conditions is searching an optimum point of these two
tendencies. This problem is often refered as the integrability problem of
surface charges as the selection of the boundary term to form a differentiable
generator involves solving an integrability equation on the space of field
configurations. The references \cite{henneaux_asymptotically_2004,
henneaux_asymptotic_2007} contain a few examples of integrability
problems in the hamiltonian formalism.

\vspace{5mm}

In this work, we want to present a different approach. The
first part of the idea is to treat boundary conditions as external sources.
This is certainly reasonable and has already been used, for instance in AdS/CFT
(see \cite{witten_anti-sitter_1998} and subsequent literature). The second part is to allow symmetries to act on the
sources. The obtained symmetries are not really symmetries of the theory and
will no give rise to conserved quantities. They are symmetries that send a
solution of the equations of motion for one value of the sources to a solution
with a different value of the sources. They are symmetries between different
theories and, in this work, we will call them external symmetries. We will see
that they are generated by canonical generators and that the algebra of these
generators forms a possibly extended representation of the algebra of the
external symmetries.

Treating boundary conditions as external sources introduces two layers of
boundary conditions. The external layer describes all field configurations for
all possible values of the boundary conditions (Dirichlet, ...). This
is the set of conditions that must be preserved by the external symmetries.
The internal layer describes the field configurations for each theory: for each
specific
value of the boundary conditions. This is the set of boundary conditions
satisfied by the dynamical part of the fields and is related to the
differentiability condition of the generators. In other words, we decoupled
the two effects of the boundary conditions in the integrability problem and
introduced a lot more possibilities.

\vspace{5mm}

The first part of this work contains a definition and study of the notion of external
symmetries in hamiltonian theories with a finite number of degrees of freedom.
We show that, under reasonable assumptions, these symmetries are generated by
canonical generators. We also introduce a modified poisson bracket to compute
the algebra of these associated generators. This algebra forms a possibly
extended representation of the algebra of the external symmetries.

In the second part, we study hamiltonian field theories without gauge freedom.
We promote boundary conditions to external sources and use the results we
obtained earlier to define external symmetries in this case. As before, we
introduce a modified poisson bracket and compute the algebra of the
generators. We then provide an example by applying the results to a scalar
field theory.

The last part contains the study of gauge theories. Due to possible
interactions between dynamical fields and lagrange multipliers through
boundary conditions, we have to keep the lagrange multipliers explicitly in
our analysis. To this end, we start with a small generalisation of standard
results concerning conserved quantities in gauge theories with a finite
number of degrees of freedom. We then combine this with the notion of external
symmetries developed in the previous sections. As before, we compute the
algebra of the generators of external symmetries and show that it forms a
possibly extended representation of the algebra of the associated symmetries.

\section{Hamiltonian theory with external sources}
\label{sec:extsources}
In this section, we will describe the hamiltonian theory in presence
of external sources. We will focus on mechanical systems with a
finite number of degrees of freedom.

A general system has the following action:
\begin{equation}
\label{eq:actsources}
S[q^i, p_i; j^\alpha] = \int dt \, \left\{\dot q^i p_i - H(t,q^i, p_i ;
  j^\alpha, d_t j^\alpha, ... , d_t^k j^\alpha) \right\},
\end{equation}
where $j^\alpha$ is an external source: it is not varied when
deriving the equations of motions. If we only work with 
transformations that preserve exactly the form of the action, sources included, the usual hamiltonian theory of charges is well
behaved. We are able to associate conserved canonical generators and compute their
algebra for each fixed value of $j$. The general picture is a symmetry group
for $j=0$ that, when $j \ne 0$, is broken to the subgroup
preserving this particular value of $j$.

There exists systems where we can keep all the symmetries if we
allow the transformations to act on the sources. For instance
electromagnetism in presence of an external current
\begin{equation}
S[A_\mu; j^\nu] = \int d^nx \left(-\frac{1}{4} F_{\mu\nu} F^{\mu\nu} +
A_\mu j^\mu\right),
\end{equation}
is invariant under the full Poincar\'e algebra if we allow the 
transformations to act on
$j^\mu$. However, these transformations don't really preserve the
action, they send one value of the sources to a different one: they
are symmetries between different problems. Our goal in this section is
to see how we can extend the hamiltonian theory of charges to these
generalized symmetries. 

An external symmetry $\delta_G$ of the action (\ref{eq:actsources}) will be defined as a
transformation of the form 
\begin{eqnarray}
\delta_G q^i & = & Q^i(t,q, d_t q, ... ,p, d_t p , ..., j, d_t j, ...), \\ 
\delta_G p_i &=& P_i(t,q, d_t q, ... ,p, d_t p , ..., j, d_t j, ...), \\ 
\delta_G j^\alpha &=& J^\alpha
(t,q, d_t q, ... ,p, d_t p , ..., j, d_t j, ...),
\end{eqnarray}
that preserves the integrand of the action up to a time derivative or a
function of the sources
\begin{equation}
\label{eq:invaction}
\delta_G  \left\{\dot q^i p_i - H(t,q^i, p_i ;
  j^\alpha, d_t j^\alpha, ... , d_t^k j^\alpha) \right\} =
\frac{d}{dt} U(t,q,p;j,,...) + V_G(j, d_tj, ...).
\end{equation}
The most important property in the usual
case is that a symmetry of the action is a symmetry of the equations
of motion. This is coming from the identity
\begin{equation}
\label{eq:commutvareuler}
\delta_G \frac{\delta}{\delta z^A} L = \frac{\delta}{\delta
  z^A} \delta_G L + \sum ^\infty_{k=0} (-d_t)^k \left(\frac{\d Z^B}{ \d
    (d_t)^k z^A } \frac{\delta L}{\delta z^B} \right),
\end{equation}
where $z^A=(q^i, p_i)$ and $Z^A=(Q^i, P_i)$.
If $\delta_G L = d_t U $, this identity reduces to
\begin{equation}
\delta_G \frac{\delta}{\delta z^A} L = \sum ^\infty_{k=0} (-d_t)^k
\left(\frac{\d Z^B}{ \d
    (d_t)^k z^A} \frac{\delta L}{\delta z^B} \right),
\end{equation}
which is zero on the EOM $\frac{\delta L}{\delta z^A}\approx 0$. In our
case, the identity \eqref{eq:commutvareuler} becomes
\begin{equation}
\delta \frac{\delta}{\delta z^A} L = \sum_{k=0}^\infty (-d_t)^k \left(\frac{\d Z^B}{ \d
    (d_t)^k z^A } \frac{\delta L}{\delta z^B} \right) + \sum ^\infty_{k=0} (-d_t)^k \left(\frac{\d J^\alpha}{ \d
    (d_t)^k z^A } \frac{\delta L}{\delta j^\alpha} \right).
\end{equation}
This is zero on the
equations of motion if $J^\alpha$ is independent of the dynamical
fields. The sources have to transform without involving the dynamical
fields. 

This is a natural restriction to impose on the
transformations. In this case, one problem (the variational problem
for one value of $j^\alpha$) will be send by the transformation to another
problem (same variational problem for a different value of $j^\alpha$). If
the transformation of $j^\alpha$ is allowed to depend on the dynamical fields,
all the possible values of the dynamical fields for one value of $j^\alpha$
will be send to different values of the sources, to different
problems. This will completely destroy our variational
theory. In the following, we will only work with
transformations where $J^\alpha$ is independent of $z^A$. Another useful way
to encode this information is:
\begin{equation}
	\label{eq:preservj}
	\left[ \delta, \delta_G\right] j^\alpha = 0 \quad \forall \delta
	\quad \text{s.t.} \quad \delta j^\alpha = 0.
\end{equation}

For any action, we have trivial symmetry transformations
\begin{equation}
\delta_M z^A = M^{AB} (\frac{\delta L}{\delta z^B}) \qquad \delta_M
j^\alpha = 0,
\end{equation}
where the operator $M^{AB}$ is an anti-self-adjoint operator:
\begin{equation}
X_A M^{AB}\left(Y_B \right) = -Y_A M^{AB}\left(X_B \right) +
\frac{d}{dt} \Xi_M(X,Y).
\end{equation}
They can be used to remove all dependence of $Z^A$ in the time derivatives of the
fields $z^A$. Because of this, the most general transformation we will
consider is of the form
\begin{equation}
\label{eq:transfo}
\delta_G z^A = Z^A (t,z^B, j^\alpha, d_t j^\alpha , ... , (d_t)^l
j^\alpha), \quad \delta_G j^\alpha = J^\alpha (t, j^\beta, d_t j^\beta , ... , (d_t)^l j^\beta).
\end{equation}
Equation (\ref{eq:invaction}), can be rewritten as
\begin{equation}
\dot Q^i p_i + \dot q^i P_i - \frac{\d H}{\d q^i} Q^i - \frac{\d H}{\d
  p_i} P_i - \delta_G^j H =  \frac{\d U}{\d q^i} \dot q^i +\frac{\d
  U}{\d p_i} \dot p_i +\frac{\d U}{\d t}  + \delta^j_t U + V_G,
\end{equation}
where we have introduced the following notations
\begin{eqnarray}
\delta^j_G & = & J^\alpha \frac{\partial}{\partial j^\alpha} +d_t J^\alpha
\frac{\partial}{\partial d_t j^\alpha}+d^2_t J^\alpha
\frac{\partial}{\partial d^2_t j^\alpha}+...\\
\delta^j_t & = & d_tj^\alpha \frac{\partial}{\partial j^\alpha} +d^2_t j^\alpha
\frac{\partial}{\partial d_t j^\alpha}+d^3_t j^\alpha
\frac{\partial}{\partial d^2_t j^\alpha}+...
\end{eqnarray}
This equation is valid for all values of $q^i$, $p_i$, $\dot q^i$ and
$\dot p_i$. It can be decomposed into the following equations:
\begin{eqnarray}
\label{eq:GgenQ}
\frac{\d Q^j}{\d q^i}p_j + P_i & =&  \frac{\d U}{\d q^i},\\
\label{eq:GgenP}
\frac{\d Q^j}{\d p_i} p_j & = & \frac{\d  U}{\d p_i}, \\
\d_t Q^i p_i + \delta^j_t Q^i p_i - \frac{\d H}{\d q^i} Q^i - \frac{\d H}{\d
  p_i} P_i - \delta_G^j H &=&  \frac{\d U}{\d t}  + \delta^j_t U + V_G.
\end{eqnarray}
The first two equations imply that the transformation of the hamiltonian variables $(q^i,
p_i)$ is symplectic and has as generator $G=Q^j  p_j-U$. Using this on the last equation leads to 
\begin{equation}
\label{eq:hamilconserv}
\partial_t G 
+ \{ G,H\} + \delta^j_t G - \delta^j_G H  =  V_G,
\end{equation}
where we have introduced the usual poisson bracket given
by:
\begin{equation}
\{ F,G\} = \frac{\d F}{\d q^i} \frac{\d G}{\d p_i} - \frac{\d G}{\d q^i} \frac{\d F}{\d p_i} = \frac{\d F}{\d z^A} \sigma^{AB}\frac{\d G}{\d z^B},
\end{equation}
with $\sigma^{AB}$ the poisson structure.

\vspace{5mm}

We have the following result: 
\begin{theorem}A transformation of the form
\begin{equation}
\delta_G z^A= Z^A (t,z,j,\partial_t j,... , \partial^l_t j) \qquad \text{and} \qquad
\delta_G j^\alpha = J^\alpha  (t,j,\partial_t j,... , \partial^l_t j),
\end{equation}
preserves the action in the sense \eqref{eq:invaction} if and only if there
exists a generator $G$ such
that it is the hamiltonian generator of the transformation of the canonical
variables and it satisfies \eqref{eq:GgenQ}, \eqref{eq:GgenP} and 
(\ref{eq:hamilconserv}).
\end{theorem}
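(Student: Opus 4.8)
The plan is to recognize that the "only if" direction is carried out almost in full in the computation preceding the statement, and then to run that same computation in reverse for the "if" direction. For the forward implication I would begin from the invariance condition \eqref{eq:invaction}, expanding the left-hand side through $\delta_G(\dot q^i p_i) = \dot Q^i p_i + \dot q^i P_i$ together with $\delta_G H = \frac{\d H}{\d q^i}Q^i + \frac{\d H}{\d p_i}P_i + \delta^j_G H$, and expanding $\frac{d}{dt}U$ into its explicit, canonical, and source pieces. The structural input that makes everything work is the reduced form \eqref{eq:transfo}: because $Z^A$, $U$, and $H$ carry no dependence on $\dot q^i$ or $\dot p_i$ (the trivial transformations $\delta_M$ having been used to remove it), the only velocity dependence is the explicit one sitting in $\dot q^i p_i$ and in the total derivatives $\dot Q^i$ and $\dot U$. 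Treating $\dot q^i$ and $\dot p_i$ as independent then forces the equation to split into the coefficient of $\dot q^i$, which is \eqref{eq:GgenQ}, the coefficient of $\dot p_i$, which is \eqref{eq:GgenP}, and the velocity-independent remainder.

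I would then set $G := Q^j p_j - U$ and check that \eqref{eq:GgenP} gives $\frac{\d G}{\d p_i} = Q^i$ while \eqref{eq:GgenQ} gives $\frac{\d G}{\d q^i} = -P_i$, so that $\delta_G q^i = \{q^i, G\}$ and $\delta_G p_i = \{p_i, G\}$; this exhibits $G$ as the Hamiltonian generator of the canonical part of the transformation. Substituting these two relations into the velocity-independent remainder collapses $-\frac{\d H}{\d q^i}Q^i - \frac{\d H}{\d p_i}P_i$ into $\{G,H\}$, and assembles $\partial_t Q^i p_i - \partial_t U$ and $\delta^j_t Q^i p_i - \delta^j_t U$ into $\partial_t G$ and $\delta^j_t G$, yielding \eqref{eq:hamilconserv}. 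This produces the generator whose existence is asserted.

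For the converse I would reverse each manipulation. Given a $G$ that generates the canonical transformation and satisfies the three equations, I set $U := Q^j p_j - G$; the generator property $\frac{\d G}{\d p_i} = Q^i$, $\frac{\d G}{\d q^i} = -P_i$ makes \eqref{eq:GgenQ} and \eqref{eq:GgenP} hold identically, and \eqref{eq:hamilconserv} unwinds back into the velocity-independent equation. Multiplying \eqref{eq:GgenQ} by $\dot q^i$, \eqref{eq:GgenP} by $\dot p_i$, and adding the velocity-independent equation then recombines the left-hand side into $\dot Q^i p_i + \dot q^i P_i - \delta_G H = \delta_G(\dot q^i p_i - H)$ and the right-hand side into $\frac{d}{dt}U + V_G$, which is exactly \eqref{eq:invaction}.

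Since the algebra is mostly bookkeeping, the one point demanding genuine care — and the step I expect to be the main obstacle — is justifying the velocity-independence decomposition rigorously: one must confirm that reducing to the form \eqref{eq:transfo} really purges all $\dot q^i, \dot p_i$ dependence from $Q^i, P_i, U$, so that the coefficients of $\dot q^i$ and $\dot p_i$ may be matched independently. One must also track how $\delta^j_t$ and $\delta^j_G$ act through the definition $G = Q^j p_j - U$, using that neither operator touches the momenta $p_j$ and that $\partial_t$ commutes with them, so that the source-dependent terms reassemble correctly into $\partial_t G$, $\delta^j_t G$, and $\delta^j_G H$.
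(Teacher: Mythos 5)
Your proposal is correct and follows essentially the same route as the paper: the paper's own argument is precisely the computation preceding the theorem, namely splitting the invariance condition by the coefficients of $\dot q^i$ and $\dot p_i$ (justified by the reduced form \eqref{eq:transfo}), reading off \eqref{eq:GgenQ}--\eqref{eq:GgenP}, setting $G=Q^jp_j-U$, and collapsing the remainder into \eqref{eq:hamilconserv}, with the converse obtained by reversing these equivalences.
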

\noindent Such transformations will be called external symmetries. Equation \eqref{eq:hamilconserv} is the equivalent of the conservation of $G$ in the
absence of external sources. It can be rewritten as:
\begin{equation}
	\delta_t G[z; j] = \delta^j_G H[z; j] + V_G[j], \quad \delta_t \equiv
	\frac{\d}{\d t} +\sigma^{AB} \frac{\d H}{\d z^B} \frac{\d}{\d z^A} +
	\delta^j_t.
\end{equation}
Giving the generator $G$ is not enough to describe the transformation,
one must also supply the variation of the sources $J^\alpha$ or
equivalently the operator $\delta^j_G$. In the following, when we are
referring to a transformation generated by $G$, we will assume that both
$G$ and $\delta^j_G$ are known. The generator $G$ is
defined up to a function of the sources: both $G$ and $G + K[j]$
generate the same transformation. 

\vspace{5mm}

Let's compute the generator associated to the commutator of two
transformations $(F, \delta^j_F)$ and $(G, \delta^j_G)$:
\begin{eqnarray}
\delta_{[F,G]} z^A&=& \delta_F \left\{ z^A, G[z; j]\right\} - F \leftrightarrow G\nonumber\\
&=& \left\{ \left\{  z^A,F[z; j]\right\}, G[z; j]\right\} + \left\{ z^A,
  \left\{G[z; j],F[z; j] \right\}+\delta^j_F  G[z; j]\right\} - F \leftrightarrow G\nonumber\\
&=&\left\{ z^A,
  \left\{G[z; j],F[z; j] \right\}+\delta^j_F  G[z; j]-\delta^j_G  F[z; j]\right\}.\label{eq:commuvar}
\end{eqnarray}
The canonical generator of $\delta_{[F,G]}z^A$ is given by:
\begin{equation}
	\left\{G[z; j],F[z; j] \right\}+\delta^j_F  G[z; j]-\delta^j_G  F[z; j].
\end{equation}
Following this, we introduce a new
poisson Bracket for the generators in the presence of external
sources:
\begin{equation}
\left\{F,G \right\}_j\equiv\left\{F,G \right\}+\delta^j_G  F-\delta^j_F  G.
\end{equation}
As we said earlier, the generators are associated with specific
transformations of the sources, the bracket of the generator only is
not well-defined. It should be extended to couples $(G, \delta^j_G)$. The couple
generating the transformation $\delta_{[F,G]}$ is given by:
\begin{equation}
\left[(F, \delta_F^j), (G,\delta_G^j) \right] = \left(\left\{G, F\right\}_j, [\delta^j_F, \delta^j_G]\right).
\end{equation}
This bracket is obviously antisymmetric and it can be checked that it
satisfies the Jacobi identity. If we identify the hamiltonian $H$ with
the generator of time translations, we must supply it with the right
variation of the source namely $\delta_H j^\alpha\equiv d_t j^\alpha$, which
implies $\delta^j_H = \delta^j_t$. Using our new poisson Bracket, equation
(\ref{eq:hamilconserv}) becomes
\begin{equation}
\label{eq:hamilconserv2}
\partial_t G + \{ G,H\}_j =  V_G[j].
\end{equation}

In the usual case, when considering symmetries of the action
preserving the sources $\delta^j_G=0$, we also have the possibility to add this extra
term $V_G$. In order to preserve the EOM, we need $V_G$ to be a
constant. The analysis done above goes through and we obtain this constant on
the right hand side of the conservation equation
\eqref{eq:hamilconserv2} with the unmodified poisson bracket. In this
case, the constant can be absorbed in the generator by adding to it a term
of the form $-t V_G$ allowing us to recover the usual result. For
the more general case considered in this work, due to the presence 
of the sources and their
arbitrary time dependence, we cannot get rid of this extra term.

\vspace{5mm}

The set of external symmetries (\ref{eq:transfo}) form an
algebra $\cG$ of 
symmetries of the set of theories parametrized by $j^\alpha$. To any
of these transformations, we can associate a couple $(G,\delta^j_G)$. These couples associated to the
commutator
\begin{equation}
\label{eq:commut}
\left[(F, \delta_F^j), (G,\delta_G^j) \right] = \left(\{G,F\}_j, [ \delta_F^j, \delta_G^j] \right)
\end{equation}
form a representation of $\mathcal{G}$ (The only
property missing is that $\left(\{G,F\}_j, [ \delta_F^j, \delta_G^j]\right)$ satisfies
(\ref{eq:hamilconserv2}) which is proven in appendix
\ref{sec:appsources}). This representation has
room for extensions. This is coming from the fact that we can
add any function of the sources to a generator $G$. 
\begin{theorem}
If $G_i$ forms a
generating set of the algebra $\mathcal{G}$, we have in general
\begin{equation}
\left(\{G_2,G_1\}_j, [ \delta_1^j, \delta_2^j] \right) = \left(
  G_{[1,2]} + K_{1,2}[j], \delta^j_{[1,2]} \right),
\end{equation}
with $K_{1,2}$ antisymmetric and
\begin{equation}
K_{[1,2],3} + \delta^j_3 K_{1,2} + cyclic = 0.
\end{equation}
\end{theorem}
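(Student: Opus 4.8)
The plan is to prove the two assertions about the cocycle $K_{1,2}$ separately. The claim breaks into: (i) the couple $(\{G_2,G_1\}_j,[\delta_1^j,\delta_2^j])$ generates the same transformation as $G_{[1,2]}$, so the two generators can differ only by a function of the sources, which we call $K_{1,2}$; (ii) this $K_{1,2}$ is antisymmetric in its indices; and (iii) $K_{1,2}$ satisfies the cocycle condition $K_{[1,2],3}+\delta^j_3 K_{1,2}+\text{cyclic}=0$.

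For step (i), I would use the computation already carried out in equation \eqref{eq:commuvar}: the commutator $\delta_{[F,G]}z^A$ is the Hamiltonian transformation generated by $\{G,F\}_j$. Since $G_i$ is a generating set of $\cG$, the commutator $[\delta_1,\delta_2]$ coincides with the external symmetry labelled $\delta_{[1,2]}$ in the algebra, whose canonical generator is by definition $G_{[1,2]}$. Two generators that produce the same symplectic transformation of the $z^A$ must agree up to a quantity with vanishing Poisson action on all $z^A$, i.e.\ a function of the sources alone (using the remark below the first theorem that $G$ is defined only up to $K[j]$). This forces
\begin{equation}
\{G_2,G_1\}_j = G_{[1,2]} + K_{1,2}[j].
\end{equation}
The matching of the source-variation operators, $[\delta_1^j,\delta_2^j]=\delta^j_{[1,2]}$, is immediate from \eqref{eq:commut} together with the fact that the $G_i$ generate $\cG$.

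For step (ii), antisymmetry of $K_{1,2}$ follows from the manifest antisymmetry of the modified bracket $\{\,\cdot\,,\cdot\,\}_j$ (noted after its definition) and the antisymmetry of the structure "constants'' implicit in $G_{[1,2]}=-G_{[2,1]}$ and $\delta^j_{[1,2]}=-\delta^j_{[2,1]}$: subtracting the relation for $(1,2)$ from that for $(2,1)$ isolates $K_{1,2}+K_{2,1}=0$.

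Step (iii) is where the real work lies, and I expect it to be the main obstacle. The strategy is to impose the Jacobi identity for the extended bracket $[\cdot,\cdot]$ on the triple $(G_1,G_2,G_3)$ and read off the source-dependent part. Concretely, I would compute $\big[[(G_1,\delta_1^j),(G_2,\delta_2^j)],(G_3,\delta_3^j)\big]$ using \eqref{eq:commut}, substitute $\{G_2,G_1\}_j=G_{[1,2]}+K_{1,2}$, and expand $\{G_{[1,2]}+K_{1,2},\,G_3\}_j$. The key point is how $K_{1,2}[j]$, a pure function of the sources, enters this outer bracket: since $\{K_{1,2},G_3\}=0$ and $\delta^j_{G_3}K_{1,2}=\delta^j_3 K_{1,2}$ while $\delta^j_{K_{1,2}}=0$ (the generator $K_{1,2}$ generates the trivial transformation of the sources), the modified bracket yields exactly $\{K_{1,2},G_3\}_j=\delta^j_3 K_{1,2}$. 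Summing the three cyclic permutations and using that the Jacobi identity holds for $[\cdot,\cdot]$ (asserted in the text) makes the $G_{[[1,2],3]}$-type terms cancel by the Jacobi identity of $\cG$ itself, leaving precisely
\begin{equation}
K_{[1,2],3} + \delta^j_3 K_{1,2} + \text{cyclic} = 0.
\end{equation}
The delicate bookkeeping is tracking the sign conventions between $\{G_2,G_1\}_j$ and $G_{[1,2]}$ and ensuring the action of each $\delta^j_i$ lands on the correct cocycle; once the rule $\{K[j],G\}_j=\delta^j_G K[j]$ is established, the cyclic sum collapses cleanly.
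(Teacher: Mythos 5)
Your proposal is correct and follows essentially the same route as the paper, which only sketches this argument: the difference $K_{1,2}[j]$ arises because two generators of the same canonical transformation can differ only by a function of the sources, antisymmetry comes from that of $\{\cdot,\cdot\}_j$, and the cyclic identity is read off from the Jacobi identity of the extended bracket together with the rule $\{K[j],G\}_j=\delta^j_G K[j]$. Your write-up actually supplies more detail than the paper does (the paper relegates the representation property to appendix \ref{sec:appsources} and simply asserts that the cyclic identity "comes from the Jacobi identity of the modified bracket"), and your sign bookkeeping is consistent with the paper's conventions.
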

\noindent The cyclic identity of the extension $K_{1,2}$ comes from the Jacobi identity of
the modified bracket. This is an abelian extension based on the representation of $\mathcal{G}$ on
the sources $j$ \cite{de_azcarraga_lie_1998}.

\vspace{5mm}

As an application, let's consider a point particle on which we act
with an external force $f_i(t)$:
\begin{equation}
S[q^i, p_i; f_i] = \int dt \left( \dot q^i p_i -\frac{p_ip^i}{2m} 
+ q^i f_i \right),
\end{equation}
where indices are raised and lowered with the Kronecker delta.
The equations of motion are
\begin{eqnarray}
\dot q^i &=& \frac{p^i}{m},\\
\dot p_i &=& f_i,
\end{eqnarray}
and, if we remove $p$, we obtain the famous Newton equations
\begin{equation}
m\,  d_t^2 q^i = f^i.
\end{equation}
The presence of this source term in the action spoils galilean
symmetries but using the above results, we can still associate
generators to them. The galilean transformations are:
\begin{eqnarray}
\delta_{a,v,\omega} q^i &= & a^i + tv^i - \omega^i_{\phantom i j} q^j,\\
\delta_{a,v,\omega} p_i &= & mv_i + \omega^j_{\phantom i i} p_j,\\
\delta_{a,v,\omega} f_i &= & \omega^j_{\phantom i i} f_j,
\end{eqnarray}
where $a^i$, $v^i$ and $\omega_{ij}$ with $\omega_{(ij)}=0$
respectively parametrize translations, boosts and rotations.
These transformations are of the form (\ref{eq:transfo}) and
preserve the action in the sense \eqref{eq:invaction}:
\begin{eqnarray}
\delta_{a,v,\omega} L =\frac{d}{dt}(m q^iv_i)+ (a^i+tv^i) f_i.
\end{eqnarray}
We can then associate non-conserved generators
\begin{eqnarray}
G_{a,v,\omega} &=& \left( a^i - \omega^i_{\phantom i j} q^j \right)
p_i + tv^ip_i - m v_iq^i,\\
d_t G_{a,v,\omega} &=& \left( a^i +tv^i- \omega^i_{\phantom i j} q^j \right)
f_i.
\end{eqnarray}
Part of the evolution of $G_{a,v,\omega}$ is just a rewriting of the well-known
results concerning the evolution of the momentum and angular
momentum of a point particle in presence of an external force. In
this case, the new poisson bracket of the generators is the same than
the old one and the algebra closes with a central
extension between the boosts and the translations~\cite{de_azcarraga_lie_1998}:
\begin{gather}
	\left\{ G_1, G_2\right\}_j = (\hat a^i - \hat \omega^i_{\phantom i j} q^j)
	p_i + t\hat v^ip_i - m \hat v_iq^i + m (a^i_1v_{2i} - a^i_2v_{1i}),
\end{gather}
where
\begin{gather}
	\hat a^i = \omega^i_{2j}a^j_1 - \omega^i_{1j}a^j_2, \quad \hat v^i =
	\omega^i_{2j}v^j_1 - \omega^i_{1j}v^j_2,\\
	\hat \omega^i_{\phantom ij} = \omega^i_{2k}\omega^k_{1j} -
	\omega^i_{1k}\omega^k_{2j}.
\end{gather}


\section{Boundary conditions as external sources}
\label{sec:bound-cond-as}

In this section, we will use the previous results to generalize the
hamiltonian theory of charges to symmetries not preserving
boundary conditions. The first subsection introduces the theoretical objects needed and
gives the results concerning external symmetries when boundary conditions are
treated as external sources. The second subsection contains an application: we
compute the external symmetries for a scalar field contained in a sphere with
Dirichlet boundary conditions.

From here on, we will use the conventions introduced in appendix A of
\cite{barnich_surface_2008} in order to describe spatial field configurations. 

\subsection{Generalized differentiable functionals}
\label{sec:gendifffunc}

We will consider hamiltonian systems defined on a
manifold $\Sigma$ with boundaries $\d\Sigma$ at a finite distance. Functionals
of the
canonical fields $z^A(x)$ are assumed to have the usual poisson bracket
\begin{equation}
\label{eq:PBnormal}
\left\{ F, G \right\} = \int_\Sigma d^nx \, \frac{\delta F}{\delta
  z^A}\sigma^{AB}\frac{\delta G}{\delta z^B},
\end{equation}
with $\sigma^{AB}$ a constant, anti-symmetric and invertible matrix. We will also
assume a set of boundary conditions on $z^A$. The results we will
describe in this section can be
extended easily to boundaries at infinity and asymptotic conditions.

\vspace{5mm}

The Hamiltonian theory of charges for field theories introduced in
\cite{regge_role_1974} and further developed in \cite{brown_poisson_1986} is
based on the idea that only differentiable generators are allowed in
the poisson bracket. A differentiable generator $G[z]$ is a functional
such that:
\begin{itemize}
\item the boundary term produced in its variation is zero:
\begin{equation}
\delta G = \int_\Sigma d^nx \, \frac{\delta G}{\delta z^A} \delta z^A,
\end{equation}
where $\delta z^A$ respects the boundary conditions imposed on $z^A$.
\item its associated hamiltonian vector field $\delta_G$ with
\begin{equation}
	\delta_G = \d_{(i)}G^A \frac{\d^S}{\d z^A_{(i)}}, \qquad G^A =
	\sigma^{AB}\frac{\delta G}{\delta z^B},
\end{equation}
preserves the boundary conditions on $z^A$.
\end{itemize}
With this definition, the hamiltonian theory is well behaved. However,
there are cases where the above conditions are too restrictive. In the
following, we will relax the second condition by treating part of the
boundary conditions on $z^A$ as external sources.

\vspace{5mm}

We will consider boundary conditions on the bulk fields $z^A$
parametrized by a set of non dynamical boundary fields $\zeta^\alpha$:
\begin{equation}
\label{eq:BC-gen}
\left.\chi_I^\alpha (z)\right\vert_{\d \Sigma} = \zeta^\alpha.
\end{equation}
We will assume the operators $\chi_I^\alpha$ to be local and 
independent of time.
The standard cases are Dirichlet or Newmann boundary conditions. 
In order to guarantee smooth solutions, we have to supplement these "internal"
boundary conditions with extra "external" boundary conditions:
\begin{gather}
	\label{eq:BCext}
	\chi_E^{k,\alpha} = (d_t - \delta_t)^k \chi_I^\alpha \quad
	\forall k>0, \qquad
	\delta_t = \frac{\d}{\d t} + \delta_H,\\
	\chi_E^{k,\alpha} \vert_{\d\Sigma} = 0 \quad \forall k>0 \quad
	\Leftrightarrow \quad (d_t^k -
	\delta_t^l)\chi_I^\alpha\vert_{\d\Sigma}= 0 \quad \forall k>0,
\end{gather}
where $d_t$ is the total time derivative and $\delta_H$ is the variation
generated by the hamiltonian.
Once a value of the boundary field $\zeta^\alpha$ has been chosen, these extra
boundary conditions become:
\begin{equation}
	\delta^k_t \chi^\alpha_I\vert_{\d\Sigma} = d_t^k \zeta^\alpha \quad
	\forall k>0,
\end{equation}
which usually are boundary conditions on normal
derivatives of $z^A$. However, if $\zeta^\alpha$ is allowed to vary, they
don't impose any restriction on configurations $z^A(k)$ at fixed time. When it
doesn't lead to confusion, we will use $\chi_E$ to denote the set of external
boundary conditions $\chi_E^{k,\alpha}$.

In the following, we will call variations of the fields $\delta z^A$
preserving the internal boundary conditions the variations
satisfying:
\begin{equation}
	\label{eq:preservint}
	\delta \chi_I^\alpha\vert_{\d \Sigma} = 0,\quad \delta
	\chi_E\vert_{\d \Sigma} = 0,\qquad\forall
	z \text{ s.t. } \chi_E\vert_{\d\Sigma} = 0.
\end{equation}
In general, if we restrict ourselves to configurations associated to a fixed
value of $\zeta^\alpha$, the set
of variations obtained is bigger than the one defined in \eqref{eq:preservint}. In the following, we will only consider
operators $\chi_I^\alpha$ such that these two sets are equal.

Inspired by the results of the previous section, we extend the definition of
differentiable generator to:
\begin{definition}
A generalized differentiable generator $G[z]$ is a functional such that,
for all variation $\delta z^A$ preserving the internal boundary conditions, we have 
\begin{equation}
\label{eq:diffgenG}
\delta G = \int_\Sigma d^nx \frac{\delta G}{\delta z^A} \delta z^A,\qquad \forall
z \text{ s.t. } \chi_E\vert_{\d\Sigma} = 0,
\end{equation}
and
\begin{equation}
\label{eq:preservsourceG}
\delta_G \chi_E\vert_{\d\Sigma} = 0, \quad \left[\delta, \delta_G\right]
\chi_I^\alpha(z) \vert_{\d\Sigma} =0,
	\qquad \forall
	z \text{ s.t. } \chi_E\vert_{\d\Sigma} = 0,
\end{equation}
where $\delta_G z^A \equiv G^A=\sigma^{AB} 
\frac{\delta G}{\delta z^B}$ is the transformation generated by $G$.
\end{definition}
\noindent The first condition \eqref{eq:diffgenG} guarantees that the variational principle
generating the transformation $\delta_G z^A$ is well-defined for each
value of the boundary conditions $\zeta^\alpha$:
\begin{equation}
	S_G[z^A; \zeta^\alpha] = \int dt\int_\Sigma d^nx \, 
		\left( z^A \sigma_{AB}\d_s z^B - G[z]\right),
\end{equation}
where $\sigma_{AB}$ is the inverse of $\sigma^{AB}$. The second
condition \eqref{eq:preservsourceG} is the requirement that the
transformation of the sources depends on the sources only: it is the
equivalent of equation \eqref{eq:preservj} of the previous section. 
The differentiable generators in the sense of Regge-Teitelboim
\cite{regge_role_1974} are 
the generalized differentiable generators preserving the internal boundary conditions:
$\delta_G\chi_I^\alpha(z)\vert_{\d\Sigma}=0$.

\vspace{5mm}

By analogy with the results of the previous section, we want to define the
following modified bracket of generalized differentiable functionals
$F$ and $G$:
\begin{equation}
\left\{F, G \right\}_\zeta = \left\{ F, G\right\} + \delta_G^\zeta F -
\delta_F^\zeta G,
\end{equation}
where $\delta^\zeta$ is the variation of the functionals only hitting
the sources $\zeta^\alpha$. However, whereas in the previous section
the separation between dynamical variables and sources was easily
done, in this case, $\zeta^\alpha$ and $z^A$ are linked by 
boundary conditions which means that the action of $\delta^\zeta_G$ is 
hard to identify.

To this end, let's consider a general variation $\delta z^A$ preserving only
the external boundary conditions $\delta \chi_E\vert_{\d\Sigma}=0$. Equation \eqref{eq:diffgenG} implies that the boundary
term only contains variations of the source $\zeta^\alpha$:
\begin{equation}
\label{eq:genbound1}
\delta G = \int_\Sigma d^nx \frac{\delta G}{\delta z^A} \delta z^A +
\oint_{\d \Sigma}(d^{n-1}x)_i \Theta^i_G [\delta \zeta^\alpha]. 
\end{equation}
Let's now consider variations of the form
\begin{equation}
\label{eq:genbound2}
\delta_\epsilon z^A \equiv \eta_\epsilon(x) \, \delta z^A, 
\end{equation}
where $\eta_\epsilon$ are smooth functions that are zero in a neighbourhood of the boundary and
such that
\begin{equation}
\lim_{\epsilon \rightarrow 0} \eta_\epsilon(x) = 1, \qquad \forall x
\in \Sigma \backslash \d \Sigma.
\end{equation}
For all values of $\epsilon\ne 0$, $\delta_\epsilon z^A$ preserves the
internal boundary conditions \eqref{eq:preservint} which implies 
\begin{equation}
\lim_{\epsilon \rightarrow 0}\left( \delta - \delta_\epsilon\right) G =
  \oint_{\d \Sigma} (d^{n-1}x)_i \Theta^i_G [\delta \zeta^\alpha].
\end{equation}
It means that the source part of the variation of $G[z]$ is only
encoded in the boundary term. With this, we define the modified
bracket of generalized differentiable functionals
$F$ and $G$ as:
\begin{equation}
\label{eq:PBmodif}
\left\{F, G \right\}_\zeta = \left\{ F, G\right\} + \oint_{\d \Sigma}
(d^{n-1}x)_i \Theta^i_F [\delta_G \zeta^\alpha] - \oint_{\d \Sigma}
(d^{n-1}x)_i \Theta^i_G [\delta_F  \zeta^\alpha].
\end{equation}
The boundary terms $\Theta_F$ and $\Theta_G$ are defined by equation
\eqref{eq:diffgenG} for both $F$ and
$G$. Because the modification only
concerns boundary terms, the variation generated by $\left\{ F,
  G\right\}_\zeta$ is given by
\begin{equation}
	\label{eq:commutvar}
\delta_{\{F,G\}_\zeta} = \left[\delta_G, \delta_F \right].
\end{equation} 
We also see that, for differentiable functionals in the sense of Regge-Teitelboim
\cite{regge_role_1974}, both variations $\delta_G\zeta$ and
$\delta_F\zeta$ are zero and the bracket \eqref{eq:PBmodif} reduces to
the unmodified poisson bracket \eqref{eq:PBnormal}. 

\begin{theorem}
The modified bracket \eqref{eq:PBmodif} is a well-defined poisson
bracket.
\end{theorem}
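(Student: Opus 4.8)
The plan is to verify the Poisson-bracket axioms for $\{\cdot,\cdot\}_\zeta$ in \eqref{eq:PBmodif}: bilinearity, antisymmetry, closure on generalized differentiable functionals, and the Jacobi identity. Bilinearity is immediate, since the bulk bracket \eqref{eq:PBnormal}, the source transformations $\delta_F\zeta$, $\delta_G\zeta$, and the surface integrals all depend linearly on their entries. Antisymmetry is manifest: exchanging $F\leftrightarrow G$ flips the sign of $\{F,G\}$ and interchanges the two boundary terms in \eqref{eq:PBmodif}. Before anything else I would record that the bracket does not depend on the representative chosen for the boundary densities $\Theta^i_F$: the decomposition \eqref{eq:genbound1} fixes $\oint_{\d\Sigma}(d^{n-1}x)_i\Theta^i_F[\delta\zeta]$ only up to densities whose integral over the closed surface $\d\Sigma$ vanishes, so the surface integrals entering \eqref{eq:PBmodif} are unambiguous. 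The two substantive points are then (i) that $\{F,G\}_\zeta$ is again a generalized differentiable generator, and (ii) the Jacobi identity.

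For closure I would lean on \eqref{eq:commutvar}, which identifies the transformation generated by the bracket as $\delta_{\{F,G\}_\zeta}=[\delta_G,\delta_F]$. Differentiability in the sense of \eqref{eq:diffgenG} I would check by varying \eqref{eq:PBmodif} directly: on variations $\delta z^A$ preserving the internal boundary conditions \eqref{eq:preservint}, the boundary contributions coming from the variation of the bulk bracket $\{F,G\}$ are cancelled by those coming from the two surface integrals, up to a leftover proportional to $\delta\zeta$; this cancellation is precisely the computation underlying \eqref{eq:commutvar}. For the source condition \eqref{eq:preservsourceG} I would argue that $[\delta_G,\delta_F]$ preserves the external boundary conditions because each of $\delta_F$ and $\delta_G$ does and is tangent to the surface $\chi_E\vert_{\d\Sigma}=0$, so the derivative of $\delta_F\chi_E$ along $\delta_G$ on that surface vanishes; the remaining requirement $[\delta,[\delta_G,\delta_F]]\chi^\alpha_I\vert_{\d\Sigma}=0$ then follows from the Jacobi identity for commutators together with the corresponding hypotheses on $\delta_F$ and $\delta_G$, mirroring the finite-dimensional fact that a commutator of source transformations is again a source transformation.

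The Jacobi identity is where the real work lies. Writing the Jacobiator $J=\{\{F,G\}_\zeta,H\}_\zeta+\text{cyclic}$, I would apply \eqref{eq:commutvar} term by term to obtain its Hamiltonian vector field
\begin{equation}
\delta_J=[\delta_H,[\delta_G,\delta_F]]+\text{cyclic}=0,
\end{equation}
which vanishes identically by the Jacobi identity for the commutator of vector fields. Since $\delta_J z^A=\sigma^{AB}\frac{\delta J}{\delta z^B}$ and $\sigma^{AB}$ is invertible, this forces $\frac{\delta J}{\delta z^A}=0$: the Jacobiator carries no bulk functional derivative and is therefore a pure functional of the sources $\zeta^\alpha$.

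It remains to show that this residual source functional actually vanishes, and this I expect to be the main obstacle. The reduction above exhausts only the bulk content; the surviving piece is exactly the accumulated boundary data, and I would track the surface integrals $\oint_{\d\Sigma}(d^{n-1}x)_i\Theta^i[\delta\zeta]$ through the nested brackets and organize them by the cocycle identity already encountered for the extension in the finite-dimensional setting, $K_{[1,2],3}+\delta^j_3K_{1,2}+\text{cyclic}=0$. The delicate point is to confirm that this remainder is not a genuine nontrivial cocycle but really cancels: a generalized generator is defined only up to an additive function of the sources, and such functions generate the trivial transformation, so the residual is trivial and, with the normalization fixed by \eqref{eq:commutvar}, it is exactly zero. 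Establishing that the boundary bookkeeping closes without leftover, rather than assuming it, is the step I would treat most carefully.
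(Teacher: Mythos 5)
Your reduction of the Jacobiator is correct as far as it goes and coincides with the first step of the paper's own argument: since $\delta_{\{F,G\}_\zeta}=[\delta_G,\delta_F]$, the cyclic sum generates the zero transformation, so the Jacobiator has vanishing functional derivative and is a pure functional of the sources. The genuine gap is in the step you yourself flag as the main obstacle: showing that this residual boundary functional is actually zero. The argument you offer for it does not work. That a generalized differentiable generator is defined only up to an additive functional of the sources does not force the Jacobiator to vanish --- exactly this ambiguity is what produces the (generally nonzero) extension terms $K_{1,2}$ in the subsequent representation theorem, and being ``trivial as a cocycle'' is not the same as being equal to zero. Nor does \eqref{eq:commutvar} fix any normalization on functionals of the sources, since such functionals generate no transformation at all and are therefore invisible to it. If the boundary remainder of the Jacobiator were a nonzero functional of $\zeta$, the bracket would simply fail the Jacobi identity and the theorem would be false; ruling this out requires an actual computation. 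The paper does it in appendix \ref{sec:jacobiMB}, writing $\{\hat F,\hat G\}_\zeta=\delta_G\hat F-d_HI^n_F\hat G$ and tracking the accumulated $d_H$-exact terms with the homotopy operators $I^n_Q$, $T_Q$, $W$ of the bi-variational formalism of \cite{barnich_surface_2008}, whose identities (A.40), (A.51), (A.52) make the boundary remainder cancel identically; alternatively it invokes \cite{bering_putting_2000}, of which this bracket is a special case. Some version of that computation is indispensable.

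The remaining axioms are handled essentially as in the paper: antisymmetry and bilinearity are immediate, the source condition \eqref{eq:preservsourceG} for the bracket follows from the Jacobi identity for commutators of variations, and the differentiability condition \eqref{eq:diffgenG} requires explicitly varying \eqref{eq:PBmodif} and checking that the boundary contributions cancel --- the paper carries this out in appendix \ref{sec:diffbrack}, and your sketch of that cancellation is plausible but would need the same level of detail to count as a proof.
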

\begin{proof}
	It is manifestly anti-symmetric and it satisfies the Jacobi
identity. An easy way to see that is to recognize that this bracket is
a particular case of the one introduced in \cite{bering_putting_2000} where we reduced
the set of available functionals. For completeness, we also give a direct
proof in appendix \ref{sec:jacobiMB}. The only thing still left to prove is that the modified
bracket of two generalized differential functionals is a
generalized differentiable functional. The second property
\eqref{eq:preservsourceG} is easy to check using the Jacobi identity
for variations:
\begin{equation}
\left[ \delta_{\{F,G\}_\zeta}, \delta\right] =
\left[\delta_G,[\delta_F, \delta] \right] - \left[ \delta_F,[\delta_G, \delta] \right].
\end{equation}
For the first condition \eqref{eq:diffgenG}, the analysis is done in appendix
\ref{sec:diffbrack}.
\end{proof}

\vspace{5mm}

Now that we have defined our additional structures, we will see
how they describe the external symmetries of the system:
\begin{equation}
S[z] = \int dt \left(\int_\Sigma d^nx \, \frac{1}{2}z^A \sigma_{AB}\dot z^B - H[z]\right).
\end{equation}
We will assume that a set of boundary conditions $\chi_I^\alpha, \chi_E$ has been
selected such that the action is well-defined under variations preserving the
internal boundary conditions. In that case, one can check that
the Hamiltonian $H[z]$ is a generalized differentiable functional with 
$\delta_H\zeta^\alpha = \dot \zeta^\alpha$. 

In analogy with the previous
section, we define:
\begin{definition}
An external symmetry $\delta_G z^A= Z^A(t,z)$ of the action is a
transformation such that
\begin{gather}
\label{eq:gentrans}
\delta_G \chi_E \vert_{\d\Sigma} = 0,\quad  [\delta_G, \delta]\chi_I^\alpha(z)\vert_{\d \Sigma} = 0,
\end{gather}
and
\begin{gather}
\label{eq:gensymm1}
\delta_G S[z] = \int dt\left\{ \frac{d}{dt} \int_\Sigma d^nx\, U(t,z) +
  \oint_{\d\Sigma}(d^{n-1}x)_i V_G^i[z]\right\},\\
\label{eq:gensymm2}
  \delta \oint_{\d\Sigma}(d^{n-1}x)_i V_G^i(z) = 0,
\end{gather}
for all $\delta$ preserving the internal boundary conditions. 
\end{definition}
\noindent The last condition \eqref{eq:gensymm2} means that the boundary term $V_G$ only depends on the non-dynamical boundary
fields $\zeta^\alpha$. As in the previous section, this is a symmetry
between different systems corresponding to different values of the
boundary conditions $\zeta^\alpha$. Remark that compared to the previous
section, we don't have to specify the variation of the sources as it is
obtained by continuity from the bulk variation.

\begin{theorem}
	\label{theo:fieldsymm}
A transformation of the form \eqref{eq:gentrans},
\eqref{eq:gensymm1} is an external symmetry of the action
\eqref{eq:gensymm2} if and only if there exists a generalized
differentiable functional $G[z]$ such that:
\begin{gather}
\delta_G z^A = \left\{ z^A, G\right\} = \sigma^{AB} \frac{\delta
  G}{\delta z^B},\\
\label{eq:geneconserv}
\d_t G + \left\{ G, H\right\}_\zeta = \oint (d^{n-1}x)_i \, V_G^i(z),\qquad
  \delta \oint_{\d\Sigma}(d^{n-1}x)_i V_G^i(z) = 0,
\end{gather}
where $\d_t$ only hits the explicit dependence in $t$ and $\delta$ span all
variations preserving
the internal boundary conditions.
\end{theorem}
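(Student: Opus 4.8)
The plan is to reproduce, at the field-theoretic level, the argument that established the first theorem of Section~\ref{sec:extsources}, with functional derivatives in place of ordinary ones and with careful tracking of the boundary terms produced by the generalized differentiability of $G$ and $H$. I would phrase the proof as an equivalence and run the same chain of equalities forward for the ``only if'' direction and backward for the ``if'' direction.

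For the forward direction I would start by varying the action under $\delta_G z^A = Z^A(t,z)$. Integrating the kinetic term by parts in time exposes the symplectic-potential boundary contribution and gives
\begin{equation}
\delta_G S = \int dt\left[\int_\Sigma d^nx\, Z^A\sigma_{AB}\dot z^B + \frac{d}{dt}\int_\Sigma d^nx\, \tfrac{1}{2}z^A\sigma_{AB}Z^B - \delta_G H\right].
\end{equation}
Equating this with the symmetry condition \eqref{eq:gensymm1} and isolating the terms linear in $\dot z^B$ — legitimate because the fields $z^A$ and their velocities are independent at fixed $t$ — is the analogue of the decomposition into \eqref{eq:GgenQ}--\eqref{eq:GgenP} in the finite-dimensional case. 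Since $\delta_G H$ and $V_G$ carry no velocities, matching the $\dot z^B$-coefficients forces $\sigma_{AB}Z^A$ to be a functional gradient, so that $Z^A = \sigma^{AB}\,\delta G/\delta z^B = \{z^A,G\}$ with the generator $G=\int_\Sigma d^nx\,(\tfrac12 z^A\sigma_{AB}Z^B - U)$ assembled from $U$ and the symplectic potential, exactly mirroring $G=Q^jp_j-U$. Here condition \eqref{eq:gensymm2} is what guarantees that $V_G$ contributes no bulk gradient and hence does not obstruct the identification of $G$.

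I would then check that this $G$ is a \emph{generalized} differentiable functional: property \eqref{eq:diffgenG} holds because, restricted to variations preserving the internal boundary conditions, the construction leaves no surplus boundary term, while the source-only conditions \eqref{eq:preservsourceG} coincide with the hypotheses \eqref{eq:gentrans} already imposed on $\delta_G$. The remaining, $\dot z$-independent, part of the invariance equation then supplies the conservation law. The crucial observation is that, using $\delta_H\zeta^\alpha = \dot\zeta^\alpha$ together with the limiting definition of the boundary functionals $\Theta_G,\Theta_H$ in \eqref{eq:genbound1}--\eqref{eq:genbound2}, the boundary pieces coming from $\delta_G H$ and from the time derivative of the symplectic potential reorganize precisely into $\oint\Theta_G[\delta_H\zeta]-\oint\Theta_H[\delta_G\zeta]$, turning the naive $\{G,H\}$ into the modified bracket $\{G,H\}_\zeta$ of \eqref{eq:PBmodif} and yielding \eqref{eq:geneconserv}. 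The backward direction then runs the same equalities in reverse: from a generalized differentiable $G$ satisfying \eqref{eq:geneconserv}, the conditions \eqref{eq:preservsourceG} reproduce \eqref{eq:gentrans}, and reading off $U$ and $V_G$ from the reassembled variation reproduces \eqref{eq:gensymm1}--\eqref{eq:gensymm2}.

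The step I expect to be the main obstacle is precisely this boundary-term bookkeeping: cleanly separating the bulk piece $\int_\Sigma(\delta H/\delta z^A)Z^A$ from the source piece of $\delta_G H$, and verifying that the surviving surface integrals are exactly those packaged by the modified bracket. This is the field-theoretic echo of the delicate split between dynamical variables and sources noted just before \eqref{eq:PBmodif}, and it relies on the standing hypothesis that the two families of boundary variations discussed around \eqref{eq:preservint} coincide, so that the $\Theta$-functionals genuinely depend on $\delta\zeta^\alpha$ alone.
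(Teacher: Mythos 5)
Your overall strategy coincides with the paper's: construct $G=\int_\Sigma d^nx\,(\tfrac12 z^A\sigma_{AB}Z^B-U)$, match the bulk coefficient of $\dot z^A$ to get $Z^A=\sigma^{AB}\delta G/\delta z^B$, and reassemble what remains into the modified bracket. The gap is in how you dispose of the boundary terms. You assert that property \eqref{eq:diffgenG} ``holds because the construction leaves no surplus boundary term,'' but nothing in the construction guarantees this: $Z^A$ and $U$ contain spatial derivatives of $z$, so $\delta G$ generically produces a nonvanishing surface integral even for variations preserving the internal boundary conditions. In the paper this is a \emph{derived} fact, and its derivation is the crux of the whole proof. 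After the bulk matching, the residual identity reads
\begin{equation}
\int dt\left[\d_t G+\left\{G,H\right\}_\zeta-\oint_{\d\Sigma}(d^{n-1}x)_i\,V_G^i\right]=\int dt\oint_{\d\Sigma}(d^{n-1}x)_i\,\Theta^i_G[\delta_H z-\dot z],
\end{equation}
and one observes that $\hat\delta z=\delta_H z-\dot z$ preserves all boundary conditions yet is otherwise arbitrary, i.e.\ it ranges over exactly the variations preserving the internal boundary conditions, while the left-hand side is independent of $\dot z$. Hence $\oint\Theta^i_G[\delta z]=0$ for all such variations --- which \emph{is} condition \eqref{eq:diffgenG} --- and the conservation law \eqref{eq:geneconserv} follows at once.

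This also exposes why your plan of ``isolating the terms linear in $\dot z^B$'' is incomplete: it is legitimate in the bulk, where $\dot z^A$ is unconstrained, but not on the boundary, where $\dot z^A$ is tied to $\dot\zeta^\alpha$ by the boundary conditions, so you cannot directly conclude anything about $\oint\Theta_G^i[\dot z]$. It is precisely the shift by $\delta_H z$ (using $\delta_H\zeta^\alpha=\dot\zeta^\alpha$) that converts the boundary velocity into a free variation of the right class. Without this step you can justify neither the generalized differentiability of $G$ nor the clean form of the conservation law. The remainder of your argument --- the appearance of $\{G,H\}_\zeta$ through the $\Theta_H[\delta_G\zeta]$ term, the identification of \eqref{eq:gentrans} with \eqref{eq:preservsourceG}, and running the computation backwards for the converse --- matches the paper.
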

\begin{proof}
	Let's assume that $\delta_G$ is an external symmetry of the action, developing the
	LHS of \eqref{eq:gensymm1}, we get
	\begin{multline}
		\label{eq:gensymmpros}
		\int dt \left\{ \int_\Sigma d^nx \left( \frac{1}{2}\sigma_{AB} Z^A
				\dot z^B+  \frac{1}{2}\sigma_{AB} z^A d_t Z^B -
			Z^A \frac{\delta H}{\delta z^A}\right) - \oint_{\d
		\Sigma} (d^{n-1}x)_i \Theta^i_H[\delta_G z] \right\}\\ = 
		\int dt \left\{ \frac{d}{dt} \int_\Sigma d^nx\, U(t,z) +
  \oint_{\d\Sigma}(d^{n-1}x)_i V_G^i(z)\right\}.
	\end{multline}
	Introducing the following functional:
	\begin{equation}
		G[z] = \int_\Sigma d^nx \, \left\{\frac{1}{2} \sigma_{AB} z^A
		Z^B - U(t, z) \right\},
	\end{equation}
	 equation \eqref{eq:gensymmpros} can be written as
	\begin{multline}
		\int dt	\frac{d G}{d t} = \int dt \left\{ \int_\Sigma d^nx
		\left( -\sigma_{AB} Z^A \dot z^B +
			Z^A \frac{\delta H}{\delta z^A}\right)\right.\\ \left.
			+ \oint_{\d
			\Sigma} (d^{n-1}x)_i \left(\Theta^i_H[\delta_G z] +
		V_G^i(z)\right)\right\},
	\end{multline}
	where we can expand the LHS to
	\begin{equation}
		\int dt	\frac{d G}{d t} = \int dt \left\{ \frac{\d G}{\d t} 
			+\int_\Sigma d^nx \frac{\delta G}{\delta z^A} 
			\dot z^A + \oint_{\d \Sigma} (d^{n-1}x)_i
		\Theta^i_G[\dot z] \right\}.
	\end{equation}
	In the bulk, $\dot z^A$ is arbitrary, this implies
	\begin{equation}
		\frac{\delta G}{\delta z^B} = -\sigma_{BA} Z^A \qquad
		\Rightarrow \qquad Z^A = \sigma^{AB}\frac{\delta G}{\delta
		z^B} ,
	\end{equation}
	which means that $G$ is the canonical generator of the transformation.
	Putting everything together, the equality \eqref{eq:gensymmpros}
	becomes
	\begin{equation}
		\label{eq:gensymmprosII}
		\int dt \left[ \frac{\d G}{\d t} + \left\{G, H \right\}_\zeta
		- \oint_{\d \Sigma} (d^{n-1}x)_iV^i(z)\right] = \int dt
		\oint_{\d\Sigma} (d^{n-1})_i \Theta^i_G[\delta_H z - \dot z].
	\end{equation}
	By construction, the transformation $\delta_H z - \dot z = (\delta_t - d_t)z = \hat \delta
	z$ preserves all boundary
	conditions. Moreover, apart from these preservation conditions, it is
	completely arbitrary: it is an arbitrary variation preserving the
	internal boundary conditions. As the LHS of
	\eqref{eq:gensymmprosII} is independant of $\dot z^A$, we have
	\begin{equation}
		\oint_{\d\Sigma} (d^{n-1})_i \Theta^i_G[\delta z] = 0,
	\end{equation}
	for all variations $\delta z^A$ preserving the internal boundary conditions.
	The generator $G$ is a generalized differentiable generator. The LHS
	of \eqref{eq:gensymmprosII} being zero for all intervals of
	integration $[t_0, t_1]$ leads to:
	\begin{equation}
		\d_t G + \left\{ G, H\right\}_\zeta = \oint (d^{n-1}x)_i \,
		V_G^i(z),
	\end{equation}
	which is what we wanted.

	For the other direction, let's assume that we have a generalized
	differentiable generator $G$ satisfying \eqref{eq:geneconserv}.  The
	variation of the action under the transformation generated by $G$,
	$\delta_G z^A = G^A$, is 
	\begin{eqnarray}
		\delta_G S & = & \int dt \int d^nx \left\{
			\frac{1}{2}\sigma_{AB}G^A\dot z^B +
		\frac{1}{2}\sigma_{AB}z^A d_t G^B - \delta_G H\right\}\\
		& = & \int dt \left\{\frac{d}{d t}\left( \int_\Sigma d^nx\,
		\frac{1}{2}\sigma_{AB} z^A G^B \right) + \int_\Sigma d^nx \,
		\sigma_{AB} G^A\dot z^B\right.\nonumber \\
		&& \left.\qquad + \left\{ G, H\right\}_\zeta -
		\oint_{\d\Sigma} (d^{n-1})_i \Theta^i_G[\delta_H z] \right\}\\
		& = & \int dt \left\{\frac{d}{d t}\left( \int_\Sigma d^nx
		\frac{1}{2}\sigma_{AB} z^A G^B \right) - \int_\Sigma d^nx\, 
		\dot z^A\frac{\delta G}{\delta z^A}\right.\nonumber \\
		&& \left.\qquad -  \d_t G  +
		\oint_{\d\Sigma} (d^{n-1})_i\left(V^i_G(z)-
	\Theta^i_G[\dot z]\right) \right\}\\
		& = & \int dt \left\{\frac{d}{d t}\left( \int_\Sigma d^nx
		\frac{1}{2}\sigma_{AB} z^A G^B - G \right)+
		\oint_{\d\Sigma} (d^{n-1})_iV^i_G(z)\right\}
	\end{eqnarray}
	which means that $\delta_G$ is an external symmetry of the action. Between the
	second and the third line we used equation \eqref{eq:geneconserv} and
	the fact that $\hat \delta z^A =\delta_Hz^A - \dot z^A$ preserves the
	internal boundary
	conditions.
\end{proof}

\vspace{5mm}

\begin{theorem}
The external symmetries of the action form an algebra. The
associated generalized differentiable generators with the modified
bracket form an extended representation of this algebra:
\begin{equation}
\left\{ G_2, G_1\right\}_\zeta = G_{[1,2]} + \oint_{\d\Sigma}
(d^{n-1}x)_i K^i_{1,2}(z),
\end{equation}
where the extra boundary term is invariant under all variations preserving the
internal boundary conditions and satisfies
\begin{gather}
	K^i_{1,2} = - K^i_{2,1},\\
\oint_{\d\Sigma}
(d^{n-1}x)_i \left(K^i_{[1,2],3}+ \delta_3 K^i_{1,2} + cyclic\right) = 0.
\end{gather}
\end{theorem}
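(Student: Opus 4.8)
The plan is to mirror the three results of Section~\ref{sec:extsources}, reducing everything to facts already established: the characterization of external symmetries in Theorem~\ref{theo:fieldsymm}, the relation $\delta_{\{F,G\}_\zeta}=[\delta_G,\delta_F]$ from \eqref{eq:commutvar}, and the fact that \eqref{eq:PBmodif} is a well-defined (antisymmetric, Jacobi-satisfying) bracket mapping generalized differentiable functionals to generalized differentiable functionals. For closure of the algebra, I take two external symmetries with generators $G_1,G_2$. By \eqref{eq:commutvar} their commutator $[\delta_{G_1},\delta_{G_2}]$ is generated by $\{G_2,G_1\}_\zeta$, which is already known to be a generalized differentiable functional, so by Theorem~\ref{theo:fieldsymm} it suffices to verify the conservation equation \eqref{eq:geneconserv}. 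Writing $DG\equiv\d_t G+\{G,H\}_\zeta$, I would show that $D$ is a derivation of the modified bracket: $\d_t$ is a derivation because the bracket structure is time-independent, and the Jacobi identity lets me rewrite $\{\{G_2,G_1\}_\zeta,H\}_\zeta=\{\{G_2,H\}_\zeta,G_1\}_\zeta+\{G_2,\{G_1,H\}_\zeta\}_\zeta$, so that $D\{G_2,G_1\}_\zeta=\{DG_2,G_1\}_\zeta+\{G_2,DG_1\}_\zeta$.

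The key computational lemma, which I would isolate first, is that for any functional depending only on the sources one has $\{\oint V,G\}_\zeta=\oint\delta_G V$, where $\delta_G$ acts on $\zeta^\alpha$ only. This follows directly from \eqref{eq:PBmodif}: the bulk bracket vanishes since $\delta(\oint V)/\delta z^A=0$; the term $\oint\Theta^i_{\oint V}[\delta_G\zeta]$ reproduces the source variation $\oint\delta_G V$; and the term $\oint\Theta^i_G[\delta_{\oint V}\zeta]$ vanishes because a source-only functional generates the trivial bulk transformation and hence, by the continuity argument of \eqref{eq:genbound2}, the trivial source variation. Feeding the conservation equations $DG_a=\oint V^i_a$ into the derivation identity then gives $D\{G_2,G_1\}_\zeta=\oint(\delta_{G_1}V_2-\delta_{G_2}V_1)$, which is source-only and therefore invariant under all variations preserving the internal boundary conditions. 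This is exactly \eqref{eq:geneconserv} with $V^i_{1,2}=\delta_{G_1}V^i_2-\delta_{G_2}V^i_1$, so the commutator is again an external symmetry and the algebra closes.

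For the representation statement, I observe that by \eqref{eq:commutvar} both $\{G_2,G_1\}_\zeta$ and the generator $G_{[1,2]}$ attached to the commutator symmetry $[\delta_1,\delta_2]$ generate the same bulk transformation; hence their difference has vanishing bulk variation and is a source-only boundary term, which I define to be $\oint_{\d\Sigma}(d^{n-1}x)_i K^i_{1,2}$. Its invariance under internal-preserving variations is immediate, and antisymmetry $K^i_{1,2}=-K^i_{2,1}$ follows from the antisymmetry of the modified bracket together with $G_{[2,1]}=-G_{[1,2]}$ modulo source terms. For the cocycle identity I substitute $\{G_2,G_1\}_\zeta=G_{[1,2]}+\oint K_{1,2}$ into the Jacobi identity $\{\{G_2,G_1\}_\zeta,G_3\}_\zeta+\mathrm{cyclic}=0$, using $\{G_{[1,2]},G_3\}_\zeta=G_{[[1,2],3]}+\oint K_{[1,2],3}$ and the lemma $\{\oint K_{1,2},G_3\}_\zeta=\oint\delta_3 K_{1,2}$. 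The cyclic sum of the bulk generators $G_{[[1,2],3]}$ vanishes by the Jacobi identity of the symmetry algebra $\mathcal{G}$, leaving precisely $\oint_{\d\Sigma}(d^{n-1}x)_i\left(K^i_{[1,2],3}+\delta_3 K^i_{1,2}+\mathrm{cyclic}\right)=0$.

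The main obstacle I expect is the careful handling of the source-only boundary functionals: proving the lemma $\{\oint V,G\}_\zeta=\oint\delta_G V$ and, in particular, the vanishing of $\oint\Theta^i_G[\delta_{\oint V}\zeta]$, which relies on the same continuity mechanism used to isolate $\Theta_G$ in \eqref{eq:genbound2} and on the second property \eqref{eq:preservsourceG}. A secondary subtlety is ensuring the cyclic sum of the $G_{[[1,2],3]}$ vanishes exactly rather than merely up to a source term; here I would use a fixed choice of the generators of $\mathcal{G}$ so that the residual source ambiguity is absorbed into the definition of $K^i_{1,2}$, exactly as in the finite-dimensional statement of Section~\ref{sec:extsources}.
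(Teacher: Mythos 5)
Your proposal is correct and follows essentially the same route as the paper: closure via the Jacobi identity of the modified bracket applied to $\d_t G+\{G,H\}_\zeta=\oint V_G$, the extension $K_{1,2}$ obtained as the difference of two generators of the same transformation (hence a source-only boundary term), and the cocycle identity from Jacobi. Your explicit lemma $\{\oint V,G\}_\zeta=\oint\delta_G V$ is used implicitly in the paper's computation, so this is a clarification rather than a different argument.
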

\noindent As before, this extension is in general non-central as it is based on the
representation of the algebra of external symmetries on the
boundary fields $\zeta$.
\begin{proof}
	The modified poisson bracket of two generalized differentiable
	generators is a generalized differentiable generator. The only thing
	we need is to check the modified conservation law
	\eqref{eq:geneconserv}. Let's assume $F$ and $G$ generate external
	symmetries of
	the action, we get
	\begin{eqnarray}
		\d_t \left\{F, G\right\}_\zeta + \left\{\left\{ F, G
		\right\}_\zeta , H\right\}_\zeta &=& \left\{ \d_tF +
		\left\{F, H\right\}_\zeta, G\right\}_\zeta + \left\{ F, \d_tG +
		\left\{G, H\right\}_\zeta\right\}_\zeta\nonumber\\
		&=& \left\{ \oint_{\d\Sigma} (d^{n-1}x)_i V_F^i, G\right\}_\zeta 
		+ \left\{F, \oint_{\d\Sigma}(d^{n-1}x)_i V_G^i
		\right\}_\zeta\nonumber \\
		&=& \oint_{\d\Sigma} (d^{n-1})_i \left(\delta_G V_F^i -
		\delta_F V_G^i \right).
	\end{eqnarray}
	The boundary term in the last line is zero under an arbitrary
	variation preserving the internal boundary conditions.

	We proved that the external symmetries of the action form an algebra. Let's
	consider $G_1$, $G_2$ and $G_{[1,2]}$ respectively the generators of
	the external symmetries of the action $\delta_1$, $\delta_2$ and
	$\delta_{[1,2]}$. As we showed, $G_{[1,2]}$ and $\{G_2, G_1\}$
	generate the same transformation. Their difference is a generalized
	differentiable generator producing a zero variation: it is a boundary
	term invariant under all variations preserving the internal boundary
	conditions. We have
	\begin{gather}
		\left\{G_2, G_1\right\}_\zeta = G_{[1,2]} + \oint_{\d\Sigma}
		(d^{n-1}x)_i K^i_{1,2}(z),\\
		\delta \oint_{\d\Sigma} (d^{n-1}x)_i K^i_{1,2}(z) = 0,
	\end{gather}
	for all variations $\delta z^A$ preserving the internal boundary conditions.
	The antisymmetry of the poisson bracket  means that
	\begin{equation}
		K^i_{1,2} = - K^i_{2,1}.
	\end{equation}
	If we have a third external symmetry $\delta_3$ then the Jacobi identity 
	for the modified poisson bracket gives the cyclic
	identity:
	\begin{equation}
		\oint_{\d\Sigma} (d^{n-1})_i \left[ K^i_{1,[2,3]} + \delta_3
		K^i_{1,2} + cyclic\right] = 0.
	\end{equation}
\end{proof}

\subsection{Example: scalar field}
\label{sec:exampl-scal-field}

In this section, we will use the theory of a single scalar field as an example for the results we presented in the previous section. The
hamiltonian action is given by
\begin{equation}
S[\phi, \pi] = \int dt\int_{\Sigma} d^3x \left\{ \pi \dot \phi -
  \frac{1}{2}\left( \frac{1}{\sqrt{g}}\pi^2 + \sqrt{g}g^{ij}\d_i\phi
    \d_j \phi\right)\right\},
\end{equation}
where $g_{ij}$ is the metric on $\Sigma$. We will consider $\Sigma$ a
ball in $\RR^n$ with its boundary being a $n-1$ sphere: $\d \Sigma =
S_{n-1}$. We will take the metric to be flat and use
spherical coordinates $x^i = r, x^A$:
\begin{equation}
g_{ij}dx^idx^j = dr^2 + r^2\gamma_{AB} dx^Adx^B.
\end{equation}
The covariant derivative associated to $g_{ij}$ will be denoted $D_i$,
in particular, we will have $D_i\pi = \d_i \pi - \Gamma^j_{ij}\pi$ as
$\pi$ is a density. We will use Dirichlet boundary conditions:
\begin{equation}
\label{eq:bcscalarfieldI}
\chi_I(\phi, \pi) = \phi, \qquad \chi_I(\phi, \pi)\vert_{\d\Sigma}= \bar \phi(t, x^A).
\end{equation}
To have smooth solutions, we also need to impose the external boundary
conditions \eqref{eq:BCext}. When the boundary field $\bar\phi$ is fixed, they become:
\begin{gather}
\label{eq:bcscalarfieldII}
\left.\frac{1}{\sqrt g}\pi \right\vert_{\d \Sigma}= \frac{d}{dt} \bar \phi, \quad \left.(D^iD_i)^k \phi \right\vert_{\d \Sigma} =
\frac{d^{2k}}{dt^{2k}} \bar \phi, \quad \left.(D^iD_i)^k \left(\frac{1}{\sqrt g}\pi \right) \right\vert_{\d \Sigma}= \frac{d^{2k+1}}{dt^{2k+1}} \bar \phi,
\end{gather}
for all integers $k>0$. The laplacian in $n$ dimensions $D^iD_i$ can be
decomposed into its radial and angular part. Doing this, the
external boundary conditions become boundary conditions on some
combinations of radial derivatives of the dynamical fields.

\vspace{5mm}

A scalar field in $\RR^{n+1}$ has the full Poincar\'e
symmetry. If we restrict the theory to $\Sigma$, this symmetry is
broken to the subgroup preserving the boundary conditions $\bar\phi(t,
x^A)$. For general values of $\bar \phi$, the resulting symmetry group
is trivial. 

Let's now consider the external symmetries we defined in the previous section
and see what subalgebra of
Poincar\'e is preserved. Poincar\'e transformations take the form
\begin{equation}
\delta_\xi \phi = \frac{\xi^\perp}{\sqrt g}\pi + \xi^i \d_i \phi, \qquad \delta_\xi \pi
= \sqrt g g^{ij}D_i \left(\xi^\perp\d_j \phi \right) + \d_i \left(
  \xi^i \pi\right),
\end{equation}
where $\xi^i-tD^i\xi^\perp$ is a time independent killing vector of
$g_{ij}$ and $\xi^\perp$ satisfies $\d_t \xi^\perp=0$ and
$D_iD_j\xi^\perp=0$. One can easily check that those combine into
$\xi^\mu=(\xi^\perp, \xi^i)$ to form a
killing vector of Minkowski $ds^2 =- dt^2
+g_{ij}dx^idx^j$. As $\delta_\xi$ is a symmetry of the action, we have
$[d_t - \delta_t, \delta_\xi]=0$. This leads to:
\begin{eqnarray}
	\delta_\xi \chi_I &=& \frac{\xi^\perp}{\sqrt g}d_t\bar\phi + \xi^A \d_A
	\bar\phi +\xi^r \d_r \phi+ \xi^A \d_A(\chi_I - \bar\phi) +
	\frac{\xi^\perp}{\sqrt g} (\delta_t - d_t)\chi_I,\\
	\delta_\xi \chi_E^k &=& (d_t - \delta_t)^k \delta_\xi \chi_I,\nonumber\\
	& = &\xi^r (d_t - \delta_t)^k \d_r \phi+ \xi^A \d_A
	\left[(d_t - \delta_t)^k \chi_I\right] -
	\frac{\xi^\perp}{\sqrt g} (d_t - \delta_t)^{k+1}\chi_I.
\end{eqnarray}
The transformation $\delta_\xi$ satisfy the boundary conditions of an external
symmetry \eqref{eq:preservsourceG} if and only if $\xi^r\vert_{\d\Sigma}=0$.
From the Poincar\'e transformations, only linear combinations of 
time translation and 
rotations satisfy those two conditions. This seems natural as they
are the only transformations preserving  $\Sigma$. The associated
generalized differential generator is
\begin{gather}
G[\xi^\perp, \xi^i] = \int_\Sigma d^nx \, \left(\xi^\perp \cH_\perp + \xi^A \cH_A\right),\\
\cH_\perp = \frac{1}{2}\left( \frac{1}{\sqrt{g}}\pi^2 + \sqrt{g}g^{ij}\d_i\phi
    \d_j \phi\right),\qquad \cH_i = \pi \d_i \phi,
\end{gather}
with $\xi^\perp$ a constant and $\xi^A$ a killing vector of the
$n-1$ sphere. A variation of the fields $\delta$ preserving only the external boundary
conditions $\delta\chi_E\vert_{\d\Sigma} = 0$ leads to
\begin{gather}
\label{eq:genrestrictPoincare}
\delta G[\xi^\perp, \xi^i] = \int_\Sigma d^nx \, \left( \frac{\delta
    G}{\delta \phi} \delta \phi + \frac{\delta
    G}{\delta \pi} \delta \pi\right) + \oint_{\d\Sigma} (d^{n-1}x)_r
\, \sqrt g \xi^\perp D^r \phi  \, \delta  \bar \phi,\\
\Rightarrow \quad \Theta^r_\xi[\delta \bar \phi] = \sqrt g \xi^\perp D^r \phi  \,\delta \bar \phi.
\end{gather}

\vspace{5mm}

The poisson bracket of two generators of the form
\eqref{eq:genrestrictPoincare} is given by
\begin{multline}
\left\{G[\xi^\perp, \xi^i], G[\eta^\perp, \eta^i]\right\} =
\int_{\Sigma} d^nx \, \left\{ (\xi^B \d_B \eta^A - \eta^B \d_B \xi^A)
  \cH_A \right\}\\
+ \oint_{\d \Sigma} (d^{n-1})_r \sqrt g \left(\eta^\perp \xi^A -
  \xi^\perp \eta^A \right) D^r\phi \, \d_A \phi.
\end{multline}
We see that the poisson bracket produces an extra boundary term. However, as
expected, this boundary term is killed if we use the modified bracket
\eqref{eq:PBmodif}:
\begin{equation}
\left\{G[\xi^\perp, \xi^i], G[\eta^\perp, \eta^i]\right\}_\zeta =
\int_{\Sigma} d^nx \, \left\{ (\xi^B \d_B \eta^A - \eta^B \d_B \xi^A)
  \cH_A \right\}.
\end{equation}
The algebra closes without extension. The hamiltonian being given by $H =
G[1,0]$ leads to
\begin{equation}
\d_t G[\xi^\perp, \xi^i] + \left\{G[\xi^\perp, \xi^i], H\right\}_\zeta = 0,
\end{equation}
which proves that these transformations are external symmetries of
the theory.

In this section, we have restricted our analysis to the Poincar\'e 
transformations for clarity. On top of considering the boundary 
conditions as sources, we also could have treated the metric as a
source. The external symmetries would then include all diffeomorphisms
preserving the form of the boundary.

\section{Surface charges for gauge theories}
\label{surface_charges}

Before studying the external symmetries of gauge field theories, we will spend
some time studying symmetries and conserved charges for gauge theories with a
finite number of degrees of freedom.

\subsection{Symmetries of gauge theories}
\label{sec:symgaugetheo}

The theories we will work with are of the form:
\begin{equation}
	S[z^A, \lambda^a] = \int dt\, \left(\frac{1}{2} z^A \sigma \dot z^B-
	h(z) - \lambda^a \phi_a\right),
\end{equation}
where $\phi_a$ are first-class constraints and $h$ is a first-class function. The
symmetries and associated conserved quantities of this class of theories are
studied in exercise 3.24 of \cite{henneaux_quantization_1992}. We will now review some of
the results obtained in this reference and introduce some new concepts that will
be needed in section \ref{sec:GFT}.

As in \cite{henneaux_quantization_1992}, let's consider a transformation of the form
\begin{equation}
	\label{eq:symgauge}
	\delta_G z^A = Z^A\left(t, z, \lambda, \dot\lambda, ...,
	\overset{(k)}{\lambda}\right), \quad \delta_G \lambda^a = \Lambda^a\left(t, z, \lambda, \dot\lambda, ...,
		\overset{(k)}{\lambda}\right).
\end{equation}
It is a symmetry of the action if and only if there exists a generator
$G\left(t, z, \lambda, \dot\lambda, ..., \overset{(k)}{\lambda}\right)$ such
that
\begin{gather}
	Z^A=\sigma^{AB} \frac{\d}{\d z^B} G,\\
	\label{eq:symgaugecons}
	\frac{D}{Dt}G + \left\{G, H\right\} = \Lambda^a\phi_a,
\end{gather}
where
\begin{equation}
	\label{eq:gaugebigdt}
	\frac{D}{Dt} = \frac{\d}{\d t} + \sum_{l=0} \overset{(l+1)}{\lambda^a}
	\frac{\d}{\d \overset{(l)}{\lambda^a}} \quad \text{and} \quad
	H=h+\lambda^a\phi_a.
\end{equation}
From \eqref{eq:symgaugecons}, one can show that the dependence in $\lambda$
of the generator $G$ is proportional to the constraints:
\begin{equation}
	\label{eq:symmdeplambda}
	G\left(t, z, \lambda, \dot\lambda, ..., \overset{(k)}{\lambda}\right)
	= \bar G(t, z) + g^a\left(t, z, \lambda, \dot\lambda, ...,
	\overset{(k)}{\lambda}\right) \phi_a.
\end{equation}
The conservation of $\bar G$ then takes the usual form:
\begin{equation}
	\frac{\d}{\d t} \bar G + \left\{\bar G, H\right\} \approx 0,
\end{equation}
where $\approx$ denotes the equality on the constraint surface. Because of
this, when studying symmetries of gauge theories, we usually restrict ourselves
to cases for which both $Z^A$ and $G$ are independent of $\lambda$ as the
potential dependence in the lagrange multipliers can always be absorbed by a
gauge transformation.

When studying gauge field theories in the next section, this restriction might lead to
problems. In some cases, the boundary conditions can create a link between the
two kind of fields: $z^A$ and $\lambda^a$. In order to deal with this it will
be easier to allow an explicit dependence on $\lambda^a$ in the variation of the
canonical variables $\delta_G z^A$. Because of this, we will spend the rest of
this section studying the algebra of the generators associated to symmetries
of the form \eqref{eq:symgauge}.

\vspace{5mm}

The first observation is that these symmetries don't form a closed subalgebra.
The problem is that the commutator of two symmetries given by
\begin{eqnarray}
	\left[\delta_1, \delta_2\right] z^A & = & Z_1\frac{\d}{\d z^B} Z^A_2 +
	\sum_{l=0} d_t^l\Lambda^b_1 \frac{\d}{\d
	\overset{(l)}{\lambda^b}}Z_2^A - (1\leftrightarrow 2),\\
	\left[\delta_1, \delta_2\right] \lambda^a & = & Z_1\frac{\d}{\d z^B} \Lambda^a_2 +
	\sum_{l=0} d_t^l\Lambda^b_1 \frac{\d}{\d
	\overset{(l)}{\lambda^b}}\Lambda^a_2 - (1\leftrightarrow 2),
\end{eqnarray}
contains explicit dependence in the time derivative of $z^A$ through the terms $d_t^l\Lambda^a$.
The way out is that we are studying equivalence classes of symmetries
where two symmetries are equivalent if their difference is a trivial symmetry:
\begin{equation}
	\delta_1 \sim \delta_2 \Leftrightarrow \left\{ \begin{array}{rcl}
			\delta_1 z^A - \delta_2 z^A& = &
			M^{AB}\left(\frac{\delta L}{\delta z^B}\right) - 
			M^{\dagger bA}\left(\frac{\delta L}{\delta
			\lambda^b}\right),\\
			\delta_1 \lambda^a - \delta_2 \lambda^a& = &
			M^{aB}\left(\frac{\delta L}{\delta z^B}\right) + 
			M^{ab}\left(\frac{\delta L}{\delta
			\lambda^b}\right),\\
		\end{array}\right.
\end{equation}
with
\begin{gather}
	M^{..}(F) = \sum_{l=0}^{k} M^{..}_l d_t^l F, \quad
	M^{\dagger ..}(F) = \sum_{l=0}^{k} \left(-d_t\right)^l\left(
	M^{..}_l F\right),\\
	M^{\dagger AB} = -M^{BA}, \qquad M^{\dagger ab} = - M^{ba}.
\end{gather}
Two equivalent symmetries will lead to conserved quantities that are equal on
the equations of motion. One can show that: $[\delta_1, \delta_2] \sim
\delta_{[1,2]}$ where
\begin{eqnarray}
	\delta_{[1,2]} Z^A & = & Z_1\frac{\d}{\d z^B} Z^A_2 +
	\sum_{l=0} \delta_t^l\Lambda^b_1 \frac{\d}{\d
	\overset{(l)}{\lambda^b}}Z_2^A\\
	&& \quad + \sum_{l=1}\sum_{i=0}^{l-1}\sigma^{AB}\frac{\d}{\d z^B}
	\left(\delta^i_t \Lambda^b_1\right)\, (-\delta_t)^{l-i-1}\left(\phi_a \frac{\d}{\d
	\overset{(l)}{\lambda^b}}\Lambda^a_2\right) - (1\leftrightarrow 2),
	\nonumber \\
	\delta_{[1,2]} \lambda^a & = & Z_1\frac{\d}{\d z^B} \Lambda^a_2 +
	\sum_{l=0} \delta_t^l\Lambda^b_1 \frac{\d}{\d
	\overset{(l)}{\lambda^b}}\Lambda^a_2 - (1\leftrightarrow 2),\\
	\delta_t F & = & \frac{D}{Dt}F + \{F, H\}.
\end{eqnarray}
The transformation $\delta_{[1,2]}$ is a symmetry of the action of the form
\eqref{eq:symgauge} with the following generator:
\begin{equation}
	G_{[1,2]}= \left\{G_2, G_1\right\} + \sum_{l=0} \left( \delta^l_t
		\Lambda_1^a \frac{\d}{\d \overset{(l)}{\lambda^a}}G_2 - \delta^l_t
	\Lambda_2^a \frac{\d}{\d \overset{(l)}{\lambda^a}}G_1\right).
\end{equation}

At $t$ fixed, one can treat $\overset{(i)}{\lambda^a}$ as
independent variables. If we define:
\begin{equation}
	\tilde \delta_G z^A = Z^A_G, \quad \tilde \delta_G
	\overset{(i)}{\lambda^a} = \delta^i_t \Lambda_G^a, \quad \forall i,
\end{equation}
the above results can be rewritten:
\begin{gather}
	\delta_{[1,2]} \lambda^a = \tilde\delta_1 \Lambda_2^a - \tilde
	\delta_2 \Lambda_1^a,\\
	G_{[1,2]} = \left\{ G_2, G_1\right\}^g,
\end{gather}
where
\begin{eqnarray}
	\left\{G_1, G_2\right\}^g & \equiv & \{G_1, G_2\} +
	\tilde\delta_2^\lambda G_1 - \tilde\delta_1^\lambda G_2 \\
	& = & \tilde \delta_2 G_1- \tilde\delta_2 G_1 - \{G_1, G_2\}.
\end{eqnarray}
The notation $\tilde\delta^\lambda$ denotes the part of the variation only hitting
the dependence in $\lambda^a$ and its time derivatives. Defining
$\delta_H
\lambda^a = \dot \lambda^a$, the conservation condition
\eqref{eq:symgaugecons} becomes
\begin{equation}
	\frac{\d}{\d t} G + \{G, H\}^g = 0.
\end{equation}

These results are very similar to what we obtained when studying external
sources in section \ref{sec:extsources}. However, there are a few differences. The
first one is that the transformation of the lagrange multipliers in
\eqref{eq:symgauge} can depend on the canonical variables. The second one is
that the bracket induced on the couples $(G, \delta_G^\lambda)$ does not
satisfy the Jacobi identity. If we define:
\begin{gather}
	\left[(G_1, \delta^\lambda_1), (G_2, \delta^\lambda_2)\right] \equiv
	\left(\{G_2, G_1\}^g, \delta^\lambda_{[1,2]}\right),\\
	\tilde\delta_{[1,2]}\overset{(i)}{\lambda^a} = \delta^i_t\left(\tilde \delta_1 \Lambda^a_2 - \tilde\delta_2
	\Lambda^a_2\right),
\end{gather}
then
\begin{equation}
	\left[\left[(G_1,\delta^\lambda_1), (G_2, \delta_2^\lambda)\right],
	(G_3,\delta_3^\lambda)\right] + cycl = (G_J, \delta^\lambda_J),
\end{equation}
where
\begin{equation}
	G_J = \sum_{l=0} \left([\tilde \delta_2, \delta_t^i] \Lambda_1^a -
	[\tilde \delta_1, \delta_t^i] \Lambda_2^a\right) \frac{\d}{\d
	\overset{(l)}{\lambda^a}} G_3 + cycl.
\end{equation}
If $(G_3, \delta_3^\lambda) = (H, \delta_H^\lambda)$ then this expression
simplifies to $G_J=0$. This is another way of saying that, if both $(G_1,
\delta^\lambda_1)$ and $(G_2, \delta^\lambda_2)$ generate symmetries of the
action, then $[(G_1,\delta^\lambda_1),(G_2, \delta^\lambda_2)]$ also generates
a symmetry of the action. If all three couples $(G_1, \delta^\lambda_1)$,
$(G_2, \delta^\lambda_2)$ and $(G_3, \delta^\lambda_3)$ generate symmetries of
the action, we have $G_J \approx 0$ as expected.
In the case where the generators $G_1$ and $G_2$ are independent of the
lagrange multipliers, the associated transformations $\delta^\lambda_1$ and
$\delta^\lambda_2$ don't matter and we have
\begin{equation}
	\{G_1, G_2\}^g = \{G_1, G_2\}.
\end{equation}

\subsection{Gauge field theories}
\label{sec:GFT}

This section is dedicated to the generalisation of the results of section \ref{sec:bound-cond-as}
to gauge field theories. We will see how our previous analysis can solve some
integrability problems in the definition of surface charges associated to
gauge transformations. The main idea
will be to combine the results obtained in the previous subsection with those
obtained in section \ref{sec:gendifffunc}.

We will consider gauge theories of the form:
\begin{gather}
\label{eq:gaugeaction}
S[z,\lambda] = \int dt \left\{\int_\Sigma d^nx  \frac{1}{2} z^A \sigma_{AB}
\dot z^B - H[z, \lambda]\right\},\\
	H[z,\lambda] = \int_\Sigma d^nx \left(h(z) +
	\lambda^a\phi_a\right) +
		\oint_{\d\Sigma}(d^{n-1}x)_i b^i_H(z,\lambda),
\end{gather}
where $\phi_a$ are first-class constraints and $h$ is a first-class hamiltonian.
In general, both the dynamical fields $z^A$ and the lagrange multipliers
$\lambda^a$ will
have non-trivial internal boundary conditions:
\begin{equation}
\label{eq:gaugebcI}
\chi_I^\alpha (z, \lambda) \vert_{\d \Sigma} = \zeta^\alpha.
\end{equation}
On top of these, we will impose the equations of motion and all their
derivatives on the boundary:
\begin{equation}
	\label{eq:gaugeBCEOM}
	\left.\d_{(i)} \phi_a \right\vert_{\d\Sigma} = 0, \qquad \left.\d_{(i)} \left(\dot z^A -
	\sigma^{AB}\frac{\delta H}{\delta z^B}\right)\right\vert_{\d\Sigma}=0.
\end{equation}
This requirement is stronger than the one imposed in section
\ref{sec:gendifffunc} as the smoothness conditions are automatically satisfied
when the EOM are imposed on the boundary. Imposing all equations of motion on
the boundary may seem like a very strong requirement, however, as we will only
be concerned by symmetries, it will not restrict our analysis.
The conditions \eqref{eq:gaugeBCEOM} will be referred as the external boundary
conditions.
We will also assume the same regularity requirement as in section
\ref{sec:gendifffunc} for
the variations: the set of variations preserving the internal boundary
conditions
\begin{equation}
	\delta \chi_E\vert_{\d\Sigma} = 0, \quad \delta
	\chi_I^\alpha\vert_{\d\Sigma} = 0 \qquad \forall (z, \lambda) \text{
	s.t. } \chi_E\vert_{\d\Sigma} = 0,
\end{equation}
when evaluated on a specific value of $\zeta^\alpha$ is equal to the set
\begin{equation}
	\delta \chi_E\vert_{\d\Sigma} = 0, \quad \delta
	\chi_I^\alpha\vert_{\d\Sigma} = 0 \qquad \forall (z, \lambda) \text{
	s.t. } \chi_E\vert_{\d\Sigma} = 0, \quad \chi_I^\alpha
	\vert_{\d\Sigma} = \zeta^\alpha.
\end{equation}
We will require the total hamiltonian $H$ to have a well defined variation: for all variations $\delta$ preserving
the internal boundary conditions, we have
\begin{gather}
	\delta H = \int_\Sigma d^nx \left(\frac{\delta H}{\delta z^A}\delta
	z^A + \phi_a \delta \lambda^a\right).
\end{gather}

\vspace{5mm}

Let's consider a couple $(G, \delta^\lambda_G)$ such that
\begin{gather}
	\label{eq:gaugecoupleI}
	G[t,z,,\lambda, \dot \lambda,
	...,\overset{(k)}{\lambda}],\qquad \delta^\lambda_G \lambda^a = \Lambda^a(t,z,\lambda, \dot \lambda,
	...,\overset{(k)}{\lambda}),\\
	\label{eq:gaugecoupleII}
	[\delta_G, \delta] \chi_I^\alpha \vert_{\d\Sigma} = 0,\quad
	\delta G = \int_\Sigma d^nx \left(\frac{\delta G}{\delta z^A}\delta
		z^A + \sum_{l=0} \frac{\delta G}{\delta
	\overset{(l)}{\lambda^a}} \delta \overset{(l)}{\lambda^a}\right),
\end{gather}
for all variation $\delta$ preserving the internal boundary conditions. 
Using arguments similar to those used in section \ref{sec:gendifffunc}, one can prove:
\begin{theorem}
	A couple $(G, \delta^\lambda_G)$ of the form \eqref{eq:gaugecoupleI}
	satisfying \eqref{eq:gaugecoupleII} is an external symmetry of the
	action if and only if
there exists a boundary term $\oint V_G$ such that:
\begin{gather}
	\label{eq:GFTconscond}
	\frac{D}{Dt}G + \{G, H\}_\zeta- \delta^\lambda_GH = \oint_{\d\Sigma} (d^{n-1}x)_i V^i_G(z, \lambda, \dot \lambda,
	...),\\
	Z^A = \sigma^{AB}\frac{\delta G}{\delta z^B},\quad \frac{D}{Dt} =
	\frac{\d}{\d t} + \sum_{l=0} \d_{(i)}\overset{(l+1)}{\lambda^a}
	\frac{\d}{\d \overset{(l)}{\lambda^a}_{(i)}},
\end{gather}
with 
\begin{gather}
	\delta\oint_{\d\Sigma} (d^{n-1}x)_i V^i_G(z, \lambda, \dot \lambda,
	...)=0,
\end{gather}
for all variations $\delta$ preserving the internal boundary conditions.
\end{theorem}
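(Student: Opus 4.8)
The plan is to combine the two results already at hand: the field-theoretic argument behind Theorem~\ref{theo:fieldsymm}, which reads off a generalized differentiable generator from the invariance of the action, and the finite-dimensional gauge analysis of section~\ref{sec:symgaugetheo}, which shows how an explicit dependence on $\lambda^a$ and its time derivatives is absorbed by trading $\d_t$ for $\frac{D}{Dt}$ and by the appearance of a $-\delta^\lambda_G H$ term in the conservation law. The part played by the smoothness conditions of section~\ref{sec:gendifffunc} is taken over here by the stronger external conditions \eqref{eq:gaugeBCEOM}: on $\d\Sigma$ the constraints $\phi_a$ and all derivatives of the equations of motion vanish, and this is precisely what makes the boundary manipulations collapse.

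For the forward direction I would assume that $(G, \delta^\lambda_G)$ generates an external symmetry and expand $\delta_G S$ as in \eqref{eq:gensymmpros}, now also varying the multipliers by $\delta_G\lambda^a = \Lambda^a$. The symplectic term contributes $\int_\Sigma d^nx\,\big(\frac12\sigma_{AB}Z^A\dot z^B + \frac12\sigma_{AB}z^A d_t Z^B\big)$, whereas $-\delta_G H$ splits into $-\int_\Sigma d^nx\,Z^A\frac{\delta H}{\delta z^A}$, which will assemble into $\{G,H\}$, the multiplier piece $-\delta^\lambda_G H$, and a boundary contribution $-\oint_{\d\Sigma}\Theta^i_H[\delta_G z]$ together with the variation of $b^i_H$. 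Introducing $G = \int_\Sigma d^nx\,\big(\frac12\sigma_{AB}z^A Z^B - U\big)$ and comparing with $\int dt\,\frac{dG}{dt}$, whose explicit-time and $\lambda$-derivative parts combine into $\frac{D}{Dt}G$, the arbitrariness of $\dot z^A$ in the bulk forces $Z^A = \sigma^{AB}\frac{\delta G}{\delta z^B}$, so $G$ is the canonical generator. What remains is a boundary integral of $\Theta^i_G$ evaluated on $\hat\delta z = \delta_H z - \dot z$; by \eqref{eq:gaugeBCEOM} this combination preserves all boundary conditions and is otherwise arbitrary, so it simultaneously forces $\oint_{\d\Sigma}\Theta^i_G[\delta z]=0$ for every internal-BC-preserving $\delta z$ (making $G$ a generalized differentiable generator) and, once the boundary pieces from $\Theta_H$ are folded into $\{G,H\}_\zeta$ via \eqref{eq:PBmodif}, produces exactly \eqref{eq:GFTconscond}. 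The invariance $\delta\oint V_G = 0$ is inherited from the defining property of an external symmetry, as in \eqref{eq:gensymm2}.

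For the converse I would start from a couple satisfying \eqref{eq:GFTconscond} and compute $\delta_G S$ directly under $\delta_G z^A = G^A$, $\delta_G\lambda^a = \Lambda^a$, running the chain of equalities in the proof of Theorem~\ref{theo:fieldsymm} backwards: integrate the symplectic term by parts in $t$, use the generalized differentiability of $G$ to peel off $\Theta^i_G$, and insert the conservation law to collapse everything into $\int dt\,\big\{\frac{d}{dt}(\,\cdots\,) + \oint_{\d\Sigma}V^i_G\big\}$, which is the statement that $\delta_G$ is an external symmetry. I expect the main obstacle to be the bookkeeping imposed by the explicit $\lambda$-dependence: one has to treat the $\overset{(l)}{\lambda^a}$ as independent variables (the $\tilde\delta$ device of section~\ref{sec:symgaugetheo}) so that $\frac{D}{Dt}$ and $\delta^\lambda_G$ act on the right slots, and one has to check that the boundary terms coming from $b^i_H$, from $\Theta^i_H[\delta_G z]$, and from varying $\lambda^a$ on $\d\Sigma$ recombine precisely into the modified bracket $\{G,H\}_\zeta$ and the $\delta^\lambda_G H$ term. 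This last recombination is where \eqref{eq:gaugeBCEOM} is essential, since it is what annihilates the boundary contributions that would otherwise obstruct the identification.
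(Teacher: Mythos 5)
Your proposal follows precisely the route the paper intends: the paper gives no separate proof of this theorem, stating only that it follows by arguments similar to those of section~\ref{sec:gendifffunc}, i.e.\ by adapting the proof of Theorem~\ref{theo:fieldsymm} with $d_t$ replaced by $\frac{D}{Dt}$, the extra $-\delta^\lambda_G H$ term from varying the multipliers, and the stronger conditions \eqref{eq:gaugeBCEOM} controlling the boundary manipulations --- which is exactly what you do. The only superfluous step is your re-derivation of $\oint_{\d\Sigma}\Theta^i_G[\delta z]=0$: in the gauge setting this is already part of the standing hypothesis \eqref{eq:gaugecoupleII} (and the $\hat\delta z=\delta_Hz-\dot z$ argument would not by itself reach all internal-BC-preserving variations here, since \eqref{eq:gaugeBCEOM} forces $\hat\delta z$ to vanish on $\d\Sigma$ together with all its derivatives), but nothing in your argument actually depends on that derivation.
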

\noindent Remark that, because the transformations we are studying are locally
symmetries of the equations of motion, the external boundary conditions are
always preserved. This theorem has an interesting corollary which is the
field theory
equivalent of equation \eqref{eq:symmdeplambda}:
\begin{corollary}
	\label{theo:coroll}
	If a couple $(G, \delta^\lambda_G)$ of the form \eqref{eq:gaugecoupleI}
	satisfying \eqref{eq:gaugecoupleII} is an external symmetry of the
	action then
	\begin{equation}
		\label{eq:GFTsymmdeplambda}
		\frac{\delta G}{\delta \overset{(l)}{\lambda^a}}\approx 0,
		\qquad \forall l.
	\end{equation}
\end{corollary}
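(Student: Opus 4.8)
The plan is to derive the result directly from the conservation condition \eqref{eq:GFTconscond} by exploiting the structure of its left-hand side with respect to the Lagrange multipliers and their time derivatives. The key observation, mirroring the finite-dimensional argument leading to \eqref{eq:symmdeplambda}, is that the operator $\frac{D}{Dt}$ acts on $G$ by shifting the jets $\overset{(l)}{\lambda^a} \mapsto \overset{(l+1)}{\lambda^a}$, so the dependence of $G$ on the highest-order multiplier $\overset{(k)}{\lambda^a}$ propagates upward to produce a term involving $\overset{(k+1)}{\lambda^a}$ that cannot be cancelled by any other piece of the equation. First I would note that the right-hand side $\oint V_G^i$ and the bracket terms $\{G,H\}_\zeta$ and $\delta^\lambda_G H$ all involve multipliers only up to order $k$, since $H$ itself depends on $\lambda^a$ but not on its derivatives, and $\delta^\lambda_G$ introduces $\Lambda^a$ which by \eqref{eq:gaugecoupleI} is again of order at most $k$.

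The main step is to treat the jets $\overset{(i)}{\lambda^a}$ as independent variables at fixed time, exactly as was done in Section \ref{sec:symgaugetheo}, and then differentiate \eqref{eq:GFTconscond} with respect to the highest-order jet appearing. Carrying out $\frac{\d}{\d \overset{(k+1)}{\lambda^a}}$ on both sides, only the $\frac{D}{Dt}G$ term survives on the left because it is the unique term carrying $(k+1)$-th derivatives of $\lambda$; this isolates $\frac{\delta G}{\delta \overset{(k)}{\lambda^a}}$ and forces it to vanish weakly. The condition $\delta \oint V_G^i = 0$ under variations preserving the internal boundary conditions, together with the well-defined variation of $H$, ensures the right-hand side contributes nothing at this order. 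Having established \eqref{eq:GFTsymmdeplambda} for $l=k$, I would then descend by downward induction on $l$: feeding the vanishing of the top coefficients back into \eqref{eq:GFTconscond} removes the next-highest jet dependence, and repeating the differentiation argument at order $k$, then $k-1$, and so on, kills each $\frac{\delta G}{\delta \overset{(l)}{\lambda^a}}$ in turn, all holding modulo the constraints $\phi_a$ and the equations of motion imposed on the boundary via \eqref{eq:gaugeBCEOM}.

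The hard part will be keeping careful track of exactly which terms survive at each order once the boundary integrals and the $\zeta$-modification of the bracket are included. In the finite-dimensional case of \eqref{eq:symmdeplambda} the bracket is the ordinary one and everything is algebraic; here the modified bracket $\{G,H\}_\zeta$ contributes boundary terms $\Theta_G^i[\delta_H\zeta]$ whose dependence on the multiplier jets must be shown to be subleading, and the relation $\approx$ is genuinely weak equality on the constraint surface rather than strict equality. I expect the cleanest route is to argue that, because the transformation is locally a symmetry of the equations of motion, the combination $\frac{D}{Dt}G + \{G,H\}_\zeta - \delta^\lambda_G H$ is constrained to equal a pure boundary term independent of the dynamical jets, so the entire bulk jet-dependence on $\lambda$ must reorganize into multiples of $\phi_a$. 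This is precisely the content of \eqref{eq:GFTsymmdeplambda}, and it reduces to the established finite-dimensional statement once one localizes away from the boundary using the cutoff functions $\eta_\epsilon$ of \eqref{eq:genbound2}.
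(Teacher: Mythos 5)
Your proposal follows essentially the same route as the paper: rewrite the conservation law \eqref{eq:GFTconscond} using $\delta_t$, hit it with an arbitrary bulk variation of the single jet $\overset{(l)}{\lambda^a}$ (which kills all boundary terms and the $\zeta$-modification of the bracket), and descend recursively from $l=k+1$. The only ingredient worth making explicit is that the weak (rather than strict) vanishing enters through the term $\delta^l\Lambda^a\,\phi_a$ produced by $\delta^\lambda_G H$ and is propagated down the recursion by $\delta_t\phi_a\approx 0$, which is exactly how the paper closes the induction.
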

\begin{proof}
	Using
	\begin{equation}
		\label{eq:GFTdefdelt}
		\delta_t = \frac{D}{Dt} +
		\d_{(i)}\left(\sigma^{AB}\frac{\delta H}{\delta z^B}\right)
		\frac{\d^S}{\d z_{(i)}^A} = \frac{\d}{\d t} + \delta_H, \quad
		\delta_H \lambda^a = \dot \lambda^a,
	\end{equation}
	the conservation equation \eqref{eq:GFTconscond} can be written
	\begin{equation}
		\delta_tG - \delta_G^\lambda H = \oint_{\d\Sigma} V_G.
	\end{equation}
	For $l>0$, if $\delta^l$ is an arbitrary bulk variation of
	$\overset{(l)}{\lambda^a}$ only, we get
	\begin{equation}
		\delta_t \delta^l G + \delta^l \overset{(l)}{\lambda^a}
		\frac{\delta G}{\delta \overset{(l-1)}{\lambda^a}} - \delta^l
		\Lambda^a \phi_a = 0,
	\end{equation}
	up to a boundary term. Because $\delta_t \phi_a \approx 0$, 
	starting from $l=k+1$, we obtain all the identities \eqref{eq:GFTsymmdeplambda} recursively.
\end{proof}
\noindent The main difference compared to the previous section is that in this case, we cannot remove the dependence of
$G$ in the lagrange multipliers due to the boundary conditions involving both
$\lambda$ and $z$.

\vspace{5mm}

Let's now consider two couples $(G_1, \delta_1^\lambda)$ and $(G_2,
\delta_2^\lambda)$ generating external symmetries of the action. The
commutator of these two symmetries will contain dependences in the time
derivatives of the dynamical variables $z^A$. As in section \ref{sec:symgaugetheo}, we can
remove them using trivial symmetry transformations to obtain $\delta_{[1,2]} \sim [\delta_1,
\delta_2]$ with:
\begin{eqnarray}
	\delta_{[1,2]}\lambda^a & = & \tilde \delta_1 \Lambda_2^a - \tilde
	\delta_2 \Lambda_1^a,\\
	\delta_{[1,2]}z^A & = & \tilde \delta_1 Z^A_2\nonumber\\
	&& +
	\sum_{l=1}\sum_{m=0}^{l-1} \sigma^{AD}\frac{\delta}{\delta z^D_{(j)}}
	(\delta_t^m \Lambda_1^b) (-\d)_{(j)}
	(-\delta_t)^{l-1-m}\left(\frac{\delta
	}{\delta \overset{(l)}{\lambda^b}}(\phi_a \Lambda^a_2)\right)\nonumber\\
	&& \quad - (1\leftrightarrow 2),
\end{eqnarray}
where
\begin{equation}
	\tilde \delta_G = \d_{(i)} \left(\sigma^{AB}\frac{\delta G}{\delta
	z^B}\right) \frac{\d^S}{\d z^A_{(i)}} + \sum_{l=0} \d_{(i)}
		\left(\delta^l_t \Lambda^a_G\right) \frac{\d^S}{\d
		\overset{(l)}{\lambda^a}_{(i)}},
\end{equation}
with $\delta_t$ defined in equation \eqref{eq:GFTdefdelt}. Because $\delta_t -
d_t$ is proportional to the EOM and we imposed them
on the boundary, the transformations $\tilde \delta_1$, $\tilde \delta_2$ and
$\delta_{[1,2]}$ all preserve the external boundary conditions. One can check
that the dynamical part of the transformation $\delta_{[1,2]}$ is generated by
\begin{equation}
	G_{[1,2]} = \{G_2, G_1\}_\zeta + \tilde\delta^\lambda_1 G_2 - \tilde
	\delta^\lambda_2G_1 = \tilde\delta_1 G_2 - \tilde
	\delta_2G_1 - \{G_2, G_1\},
\end{equation}
where $\{, \}_\zeta$ is the bracket involving only $z^A$ defined in section
\ref{sec:gendifffunc}.

\begin{theorem}
	If $(G_1, \delta_1^\lambda)$ and $(G_2,
\delta_2^\lambda)$ generate external symmetries of the action, then their
bracket $[(G_1, \delta_1^\lambda),(G_2,
\delta_2^\lambda)]$ defined by
\begin{gather}
	\label{eq:GFTbracket}
[(G_1, \delta_1^\lambda),(G_2,
\delta_2^\lambda)] = (\{G_2, G_1\}_g, \delta_{[1,2]}^\lambda),\\
\{G_2, G_1\}_g = \tilde\delta_1 G_2 - \tilde
	\delta_2G_1 - \{G_2, G_1\},\\
	\delta^\lambda_{[1,2]} \lambda^a = \tilde \delta_1 \Lambda_2^a - \tilde
	\delta_2 \Lambda_1^a,
\end{gather}
satisfies \eqref{eq:gaugecoupleII} for all variations preserving the internal
boundary conditions and generates an external symmetry of the action.
\end{theorem}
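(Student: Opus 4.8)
The plan is to reproduce, in the gauge setting, the two-step argument used for the field-theory algebra theorem: first check that the couple $(\{G_2,G_1\}_g,\delta^\lambda_{[1,2]})$ satisfies the regularity conditions \eqref{eq:gaugecoupleII}, and then verify the modified conservation law \eqref{eq:GFTconscond}, which by the preceding theorem is equivalent to generating an external symmetry. Throughout, the new ingredient compared to section \ref{sec:gendifffunc} is the explicit $\lambda$-dependence, which I would tame using Corollary \ref{theo:coroll} together with the fact that the external boundary conditions \eqref{eq:gaugeBCEOM} put all the constraints and their derivatives to zero on $\d\Sigma$. I may take as given, from the discussion preceding the theorem, that the dynamical part of $\delta_{[1,2]}$ is generated by $\{G_2,G_1\}_g$.

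For the source-preservation half of \eqref{eq:gaugecoupleII}, I would first note that $\delta_{[1,2]}$ differs from the honest commutator $[\delta_1,\delta_2]$ only by trivial symmetries proportional to the equations of motion; since \eqref{eq:gaugeBCEOM} imposes the EOM and all their derivatives on the boundary, the two transformations act identically on $\chi_I^\alpha\vert_{\d\Sigma}$. I would then apply the Jacobi identity for the commutator of variations, $[[\delta_1,\delta_2],\delta]=[\delta_1,[\delta_2,\delta]]-[\delta_2,[\delta_1,\delta]]$, and use that each $\delta_i$ satisfies \eqref{eq:gaugecoupleII} and that external symmetries preserve $\chi_E$ (being local symmetries of the EOM); hence $[\delta_j,\delta]$ is again an internal-boundary-condition-preserving variation and all the nested brackets vanish on $\d\Sigma$. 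For the generalized differentiability half, I would mirror the field-theory computation of appendix \ref{sec:diffbrack} after splitting $\{G_2,G_1\}_g = \{G_2,G_1\}_\zeta + \tilde\delta^\lambda_1 G_2 - \tilde\delta^\lambda_2 G_1$: the $\zeta$-bracket piece is handled exactly as in section \ref{sec:gendifffunc}, while the boundary terms produced by varying the two $\lambda$-pieces cancel because the $\delta^\lambda_i$ act on the sources only and the $G_i$ are themselves differentiable.

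For the conservation law, I would treat time evolution as the action of the couple $(H,\delta^\lambda_H)$ and compute $\frac{D}{Dt}\{G_2,G_1\}_g + \{\{G_2,G_1\}_g,H\}_\zeta - \delta^\lambda_{[1,2]}H$ as a Jacobiator of the three couples $(G_1,\delta^\lambda_1)$, $(G_2,\delta^\lambda_2)$, $(H,\delta^\lambda_H)$. By the controlled failure of Jacobi established in section \ref{sec:symgaugetheo}, the obstruction $G_J$ vanishes identically whenever $H$ occupies one slot, so the expression collapses to a combination of the conservation laws \eqref{eq:GFTconscond} for $G_1$ and $G_2$, i.e. to an expression built purely from $\oint V_{G_1}$ and $\oint V_{G_2}$. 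Since these boundary terms are invariant under every internal-boundary-condition-preserving variation, the result is again a pure boundary term $\oint V_{[1,2]}$ with $\delta\oint V_{[1,2]}=0$, which is exactly \eqref{eq:GFTconscond} for the bracket.

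The main obstacle I anticipate is the bookkeeping in this last step. Because the bracket on couples genuinely fails the Jacobi identity in general, I must be careful to isolate the surviving Jacobiator terms and to check that the pieces which do not cancel identically are precisely those proportional to the constraints; these are discarded on $\d\Sigma$ via \eqref{eq:gaugeBCEOM} and, in the bulk, via Corollary \ref{theo:coroll} asserting $\delta G_i/\delta\overset{(l)}{\lambda^a}\approx 0$. Keeping track of the interplay between the $\tilde\delta$-variations acting on time derivatives of the multipliers and the $\delta_t$ operators will be the delicate part of the calculation.
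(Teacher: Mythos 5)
Your proposal follows essentially the same route as the paper's proof: the source-preservation half of \eqref{eq:gaugecoupleII} via the Jacobi identity for variations, the differentiability half via the appendix-style computation of $\delta\{G_2,G_1\}_g$ (controlled by Corollary \ref{theo:coroll} and the vanishing of the constraints and their derivatives on $\d\Sigma$), and the conservation law via the exact validity of the Jacobi identity when $(H,\delta^\lambda_H)$ occupies one slot, which collapses the expression to $\oint_{\d\Sigma}(\tilde\delta_2 V_1-\tilde\delta_1 V_2)$. The one imprecision is your stated reason for the cancellation of the boundary terms coming from the $\lambda$-pieces --- the $\Lambda^a_i$ generally depend on $z$ and $\lambda$ rather than "acting on the sources only," so that cancellation genuinely rests on the external boundary conditions \eqref{eq:gaugeBCEOM} and Corollary \ref{theo:coroll}, tools you already invoke in your opening paragraph.
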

\noindent Using this bracket, the conservation condition \eqref{eq:GFTconscond} can be
rewritten
\begin{equation}
	\label{eq:GFTconscondII}
	\frac{\d}{\d t} +\{G, H\}_g = \oint_{\d\Sigma}(d^{n-1})_i V^i_G.
\end{equation}
\begin{proof}
	The first condition of \eqref{eq:gaugecoupleII} is direct using
	the Jacobi identity for the variations. In appendix \ref{sec:diffbrack}, we prove that for all variations $\delta$ preserving the internal
boundary conditions, we have:
\begin{multline}
	\delta \{G_2, G_1\}_g = \delta z^A \frac{\delta}{\delta z^A} \{G_2,
	G_1\}_g\\ + \sum_{l,m=0} \delta \overset{(l)}{\lambda^a} (-\d)_{(i)}
	\left(\frac{\d^S \delta^m_t\Lambda^b_1}{\d
		\overset{(l)}{\lambda^a}_{(i)}}\frac{\delta G_2}{\delta
	\overset{(m)}{\lambda^b}}-\frac{\d^S \delta^m_t\Lambda^b_2}{\d
		\overset{(l)}{\lambda^a}_{(i)}}\frac{\delta G_1}{\delta
	\overset{(m)}{\lambda^b}}\right),
\end{multline}
which is exactly equation \eqref{eq:gaugecoupleII}.

In appendix \ref{sec:gaugemodjacobi}, we prove that
\begin{equation}
	\{\{G_2, G_1\}_g, H\}_g = -\{\{G_2, H\}_g,G_1\}_g +\{\{G_1,H \}_g,
	G_2\}_g.
\end{equation}
Because both $(G_1, \delta_1^\lambda)$ and $(G_2, \delta_2^\lambda)$ generate
symmetries of the action, they satisfy the conservation condition \eqref{eq:GFTconscondII}. Combining these identities with
$\delta^\lambda_{[1,H]}=0$, we get
\begin{equation}
	\frac{\d}{\d t} \{G_2, G_1\}_g + \{\{G_2, G_1\}_g, H\}_g =
	\oint_{\d\Sigma} (\tilde \delta_2V_1-\tilde \delta_1V_2),
\end{equation}
with 
\begin{equation}
	\delta \oint_{\d\Sigma} (\tilde \delta_2V_1-\tilde \delta_1V_2)= 0
\end{equation}
for all $\delta$ preserving the internal boundary conditions. This proves that
the bracket of two external symmetries is an external symmetry.
\end{proof}

A subset of the symmetries of the action are proper gauge transformations.
These transformations generate the redundancy in the description of the
theory. We will define them as:
\begin{definition}
	The transformation $\delta_\Gamma$ generated by a couple 
	$(\Gamma, \delta^\lambda_\Gamma)$ is a proper gauge transformation if
	it is an external symmetry
	preserving the internal boundary conditions
	$\delta_\Gamma\chi^\alpha_I\vert_{\d\Sigma} = 0$ and if its generator satisfies:
	\begin{equation}
		\Gamma \approx 0.
	\end{equation}
\end{definition}
\noindent The requirement here is stronger than usual. The main difference is that we
are treating all values of the boundary field $\zeta^\alpha$ at the same time.
It is possible that the set of proper gauge transformation defined above, when
evaluated for a specific value of $\zeta^\alpha$, is smaller than the set
computed at fixed $\zeta^\alpha$ \cite{troessaert_canonical_2013}. In the following, we will assume that the
set of proper gauge transformations defined here generates all proper gauge
transformation for each value of the boundary field $\zeta^\alpha$. In other
word, fixing these proper gauge transformations completely fixes the gauge
freedom of each independent theory associated with the different values of the
boundary conditions.

With the above definition, we obtain the expected result:
\begin{theorem}
	If the transformation generated by a couple $(G, \delta^\lambda_G)$ is
	an external symmetry then $G$ is a first-class functional:
	for all $(\Gamma, \delta^\lambda_\Gamma)$ generating proper gauge
	transformations, the external symmetry generated by $[(G,
	\delta^\lambda_G),(\Gamma,\delta^\lambda_\Gamma)]$ is a proper gauge
	transformation.
\end{theorem}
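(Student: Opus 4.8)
The plan is to verify directly the three defining properties of a proper gauge transformation for the couple $[(G,\delta^\lambda_G),(\Gamma,\delta^\lambda_\Gamma)]$. By \eqref{eq:GFTbracket} this couple equals $(\{\Gamma,G\}_g,\delta^\lambda_{[G,\Gamma]})$, and since $(\Gamma,\delta^\lambda_\Gamma)$ is by definition itself an external symmetry, the preceding theorem already guarantees that the couple generates an external symmetry of the action. It therefore remains only to show that its transformation preserves the internal boundary conditions and that its generator $\{\Gamma,G\}_g$ is weakly zero.

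First I would dispose of the boundary condition. The variation generated by $\{\Gamma,G\}_g$ coincides, up to a trivial symmetry proportional to the equations of motion, with the commutator $[\delta_G,\delta_\Gamma]$; since all equations of motion and their derivatives are imposed on $\d\Sigma$ via \eqref{eq:gaugeBCEOM}, the trivial part drops out on the boundary and $\delta_{\{\Gamma,G\}_g}\chi_I^\alpha\vert_{\d\Sigma}=[\delta_G,\delta_\Gamma]\chi_I^\alpha\vert_{\d\Sigma}$. Because $\Gamma$ generates a proper gauge transformation, $\delta_\Gamma\chi_I^\alpha\vert_{\d\Sigma}=0$, so $\delta_\Gamma$ is an admissible variation preserving the internal boundary conditions; inserting it into the defining condition \eqref{eq:gaugecoupleII} for the external symmetry $G$ gives $[\delta_G,\delta_\Gamma]\chi_I^\alpha\vert_{\d\Sigma}=0$. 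This is precisely the preservation of the internal boundary conditions.

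The heart of the argument is to show $\{\Gamma,G\}_g\approx0$. Using the explicit form $\{\Gamma,G\}_g=\tilde\delta_G\Gamma-\tilde\delta_\Gamma G-\{\Gamma,G\}$ coming from \eqref{eq:GFTbracket}, I would treat the three terms separately. For $\tilde\delta_\Gamma G$, integrating by parts splits it into a $z$-piece equal to $\{G,\Gamma\}=-\{\Gamma,G\}$ and a $\lambda$-piece built from the derivatives $\delta G/\delta\overset{(l)}{\lambda^a}$, which vanish weakly by Corollary \ref{theo:coroll} applied to the external symmetry $G$; hence $\tilde\delta_\Gamma G\approx-\{\Gamma,G\}$. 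For $\tilde\delta_G\Gamma$ I would use that an external symmetry is locally a symmetry of the equations of motion, so $\tilde\delta_G$ is tangent to the constraint surface; as $\Gamma\approx0$, its variation along such a direction is weakly zero, $\tilde\delta_G\Gamma\approx0$. Substituting these two facts, the two copies of $\{\Gamma,G\}$ cancel, $\{\Gamma,G\}_g\approx0-(-\{\Gamma,G\})-\{\Gamma,G\}=0$. Combined with the preceding paragraph and with the couple already being an external symmetry, this shows that $[(G,\delta^\lambda_G),(\Gamma,\delta^\lambda_\Gamma)]$ is a proper gauge transformation, which is exactly the claim that $G$ is first class.

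The step I expect to be the main obstacle is the claim $\tilde\delta_G\Gamma\approx0$: one must argue carefully that an external symmetry preserves the constraint surface as a whole, treating all values of the boundary field $\zeta^\alpha$ at once rather than a single fixed value, and one has to keep track of the boundary contributions produced by the integrations by parts in $\tilde\delta_G\Gamma$ and $\tilde\delta_\Gamma G$, checking that they are consistent with $\{\Gamma,G\}_g$ being a genuinely weakly vanishing generator and not merely a bulk expression vanishing on the constraint surface.
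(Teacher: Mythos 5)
Your proposal is correct and follows essentially the same route as the paper: boundary preservation of $[\delta_G,\delta_\Gamma]$ from condition \eqref{eq:gaugecoupleII} applied with $\delta=\delta_\Gamma$ (the trivial-symmetry discrepancy dying on $\d\Sigma$ by \eqref{eq:gaugeBCEOM}), $\tilde\delta_G\Gamma\approx 0$ from $\Gamma\approx 0$ for all $\zeta^\alpha$, and $\tilde\delta_\Gamma G\approx\{G,\Gamma\}$ from Corollary \ref{theo:coroll}, whence $\{\Gamma,G\}_g\approx 0$. The step you flag as the main obstacle is precisely the one the paper also asserts without further elaboration.
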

\begin{proof}
	Let's consider two couples $(G, \delta^\lambda_G)$ and $(\Gamma,
	\delta^\lambda_\Gamma)$ respectively generating an external symmetry
	and a proper gauge symmetry.
	By definition, as $\delta_\Gamma$ preserves the internal boundary
	conditions, the bracket $[\delta_G, \delta_\Gamma]$ will also preserve
	them.
	We have $\Gamma \approx 0$ for all values of $\zeta^\alpha$: this
	implies $\delta_G\Gamma\approx 0$. As $\delta_\Gamma$ preserves the
	internal boundary conditions, we also have:
	\begin{equation}
		\delta_\Gamma G - \{G, \Gamma\} = \int_\Sigma d^nx \,
		\sum_{l=0}\frac{\delta G}{\delta \overset{(l)}{\lambda^a}} 
		\delta \overset{(l)}{\lambda^a} \approx 0,
	\end{equation}
	using corollary \ref{theo:coroll}. These two results combine to 
	\begin{equation}
		\{G,\Gamma\}_g \approx 0.
	\end{equation}
\end{proof}
\noindent Another way of expressing this result is that the sub-algebra of proper gauge
transformations forms an ideal. The Jacobi identity for the modified bracket
defined in \eqref{eq:GFTbracket} is not valid (see appendix \ref{sec:gaugemodjacobi}). However, we
still have
\begin{theorem}
	The bracket induced on the quotient of the couples $(G,
	\delta^\lambda_G)$ generating external symmetries by the couples
	generating proper gauge transformations forms a representation of the
	algebra obtained form the quotient of external symmetries by proper
	gauge transformations.
\end{theorem}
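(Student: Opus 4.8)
The plan is to prove the statement in two movements: first, that the bracket \eqref{eq:GFTbracket} descends to both quotients so that the induced bracket is well-defined, and second, that this descended bracket satisfies the Jacobi identity, so that the couples modulo proper gauge form a genuine Lie algebra surjecting onto the external symmetries modulo proper gauge. This surjection is then the representation, realized exactly as in the theorems of Sections \ref{sec:extsources} and \ref{sec:bound-cond-as}, and possibly carrying an abelian extension valued in the constraint-trivial generators. The descent is essentially the content of the previous theorem; the real work is the Jacobi identity. The only reason it can hold at all is that, although the bracket fails Jacobi on the nose (appendix \ref{sec:gaugemodjacobi}), the obstruction is proportional to the constraints and therefore lands in the ideal of proper gauge transformations.

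First I would settle well-definedness. By the previous theorem the proper gauge transformations form an ideal: if $(\Gamma, \delta^\lambda_\Gamma)$ generates a proper gauge transformation and $(G, \delta^\lambda_G)$ any external symmetry, then $\{G, \Gamma\}_g \approx 0$ and $[\delta_G, \delta_\Gamma]$ preserves the internal boundary conditions, so their bracket couple is again a proper gauge transformation. Hence replacing a generator by a proper-gauge one shifts the bracket \eqref{eq:GFTbracket} only by a proper-gauge couple, and the induced bracket on the quotient of couples by proper-gauge couples is well-defined; antisymmetry is inherited directly from \eqref{eq:GFTbracket}. The very same computation shows the commutator of the generated transformations descends to the quotient of external symmetries by proper gauge transformations, and that this quotient is a Lie algebra: the commutator of the underlying field-space transformations obeys Jacobi exactly, and trivial symmetries together with proper gauge transformations form an ideal, so the equivalence-class bracket closes.

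Next I would compute the Jacobiator of the couple bracket on three external symmetries $(G_1,\delta^\lambda_1)$, $(G_2,\delta^\lambda_2)$, $(G_3,\delta^\lambda_3)$, following verbatim the finite-dimensional calculation of Section \ref{sec:symgaugetheo} but with functional derivatives and the boundary-modified brackets. By the theorem establishing that the bracket of two external symmetries is again an external symmetry, each iterated bracket is external, so the Jacobiator is a couple $(G_J, \delta^\lambda_J)$ generating an external symmetry. The finite-dimensional template exhibits $G_J$ as built entirely from terms carrying an explicit factor of $\phi_a$ or its derivatives, so that $G_J \approx 0$; the special case in which one of the three slots is $(H, \delta^\lambda_H)$, where the Jacobiator vanishes identically, is exactly the identity proven in appendix \ref{sec:gaugemodjacobi} and used to prove closure. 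Moreover, because the Jacobiator is assembled from nested commutators $[\delta_i,\delta_j]$ of transformations that preserve the internal boundary conditions, the Jacobi identity for field-space variations forces the transformation generated by $(G_J,\delta^\lambda_J)$ to preserve them as well. Thus $(G_J, \delta^\lambda_J)$ is an external symmetry with $G_J \approx 0$ that preserves the internal boundary conditions, i.e.\ a proper gauge transformation, and it therefore vanishes in the quotient.

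With the Jacobiator killed in the quotient, the descended bracket satisfies Jacobi, so the couples modulo proper gauge form a Lie algebra surjecting onto the external symmetries modulo proper gauge; as in the earlier theorems this exhibits the generators as a (generally extended) representation of the quotient symmetry algebra, the extension living in the kernel of constraint-trivial and boundary generators. The hard part will be the last identification of the previous paragraph: promoting $G_J \approx 0$ to the statement that $(G_J,\delta^\lambda_J)$ is genuinely a proper gauge transformation. This is precisely where the boundary terms must be controlled, since one must check both that the weak vanishing survives the boundary contributions to $\{\,,\,\}_\zeta$ and that $\delta_{G_J}$ preserves $\chi_I^\alpha\vert_{\d\Sigma}$ for every value of $\zeta^\alpha$. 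I expect this step to rely essentially on the standing assumptions of Section \ref{sec:GFT}, namely the regularity hypothesis equating the two families of internal-boundary-condition-preserving variations and the assumption that the globally defined proper gauge transformations generate all proper gauge transformations at each fixed $\zeta^\alpha$; Corollary \ref{theo:coroll}, which ties the $\lambda$-dependence of any external-symmetry generator to the constraints via \eqref{eq:GFTsymmdeplambda}, is the tool I would use to finish the identification.
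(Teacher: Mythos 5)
Your overall route is the paper's: everything reduces to showing that the Jacobiator couple $(G_J,\delta^\lambda_J)$ is a proper gauge transformation, using the appendix computation for $G_J\approx 0$ and the ideal property of proper gauge transformations for well-definedness of the quotient bracket. One step, however, is justified by a false premise. You argue that $\delta_J$ preserves the internal boundary conditions ``because the Jacobiator is assembled from nested commutators $[\delta_i,\delta_j]$ of transformations that preserve the internal boundary conditions.'' The $\delta_i$ are external symmetries: they preserve only the external boundary conditions and in general act nontrivially on the boundary fields, $\delta_i\chi^\alpha_I\vert_{\d\Sigma}\neq 0$; preservation of $\chi_I\vert_{\d\Sigma}$ is precisely what singles out the \emph{proper} gauge transformations, so this premise cannot be used here.

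The correct argument, which is the one the paper gives, is that the exact commutators of variations satisfy the Jacobi identity identically, whereas the bracket \eqref{eq:GFTbracket} reproduces those commutators only up to trivial symmetries; hence $\delta_J$ is itself a trivial symmetry, proportional to the equations of motion and their derivatives. Because the external boundary conditions \eqref{eq:gaugeBCEOM} impose all equations of motion and all their derivatives on $\d\Sigma$, the trivial symmetry $\delta_J$ vanishes on the boundary and in particular preserves $\chi^\alpha_I\vert_{\d\Sigma}$. You do gesture at the trivial-symmetry ideal earlier and correctly flag this identification as the delicate point, so the fix is local: replace the quoted justification by the trivial-symmetry argument, and your proof coincides with the paper's.
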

\begin{proof}
	Let's consider three couples $(G_1, \delta_1^\lambda)$, $(G_2,
	\delta_2^\lambda)$ and $(G_3, \delta_3^\lambda)$ generating external
	symmetries. We proved in appendix
	\ref{sec:gaugemodjacobi} that the cyclic combination 
	\begin{equation}
		[[(G_1, \delta_1^\lambda), (G_2, \delta_2^\lambda)],(G_3,
		\delta_3^\lambda)] + cyclic = (G_J, \delta_J^\lambda)
	\end{equation}
	satisfies  $G_J \approx 0$. Due to the Jacobi
	identity of transformations, we know that $\delta_J$ differs form
	zero by a trivial symmetry. Because we imposed all equations of motion
	as boundary conditions, the transformation $\delta_J$ preserves the
	internal boundary conditions.
	This proves that $(G_J, \delta^\lambda_J)$ generates a proper gauge
	transformation. From this, the theorem follows easily.
\end{proof}

Let's consider two couples $(G_1, \delta_1^\lambda)$ and $(G_2,
\delta_2^\lambda)$ generating external symmetries as well as
$(G_{[1,2]},\delta_{[1,2]}^\lambda)$ a generator of the external symmetry
$\delta_{[1,2]} \sim [\delta_1, \delta_2]$. We showed that $(\{G_2,G_1\}_g,
\delta^\lambda_{[1,2]})$ also generates $\delta_{[1,2]}$. This means that both
functional differ by a functional that is in the kernel of
$\frac{\delta}{\delta z^A}$:
\begin{gather}
	\{G_2, G_1\} = G_{[1,2]} + K_{1,2},\\
	K_{1,2} = \int_\Sigma d^nx \, K^{bulk}_{1,2}(t,\lambda, ...) + \oint_\Sigma
	(d^{n-1}k)_i\, K^i_{1,2}(t,z,\lambda, ...).
\end{gather}
Because of equation \eqref{eq:GFTsymmdeplambda}, the bulk part of the functional must be 
independent of the lagrange multipliers. What is left can be absorbed into the
boundary term. Equation \eqref{eq:gaugecoupleII} then implies that $K_{1,2}$
can only depend on the boundary fields $\zeta$:
\begin{equation}
	\delta K_{1,2} = \delta \oint_{\d\Sigma} (d^{n-1}x)_i K^i_{1,2}
	= 0,
\end{equation}
for all variations $\delta$ preserving the internal boundary conditions. Using
the previous results, we also get
\begin{theorem}
If $G_i$ forms a
generating set of the algebra $\mathcal{G}$, we have in general
\begin{equation}
	[(G_1, \delta^\lambda_1), (G_2, \delta^\lambda_2)] = \left(G_{[1,2]}
		+ K_{1,2}, 
	\delta^\lambda_{[1,2]}\right),
\end{equation}
with $K_{1,2}$ antisymmetric and
\begin{equation}
K_{[1,2],3} + \delta^j_3 K_{1,2} + cyclic = 0.
\end{equation}
\end{theorem}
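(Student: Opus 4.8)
The two properties still to be established are the antisymmetry $K_{1,2}=-K_{2,1}$ and the cyclic identity; the existence of $K_{1,2}$ as a boundary functional depending only on the boundary fields $\zeta$, together with $\delta K_{1,2}=0$ for every $\delta$ preserving the internal boundary conditions, was already secured in the discussion preceding the statement, and the bulk part of $K_{1,2}$ is $\lambda$-independent by Corollary~\ref{theo:coroll} and \eqref{eq:GFTsymmdeplambda}.

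The plan for antisymmetry is to combine two elementary antisymmetries. First, $\{\cdot,\cdot\}_g$ is antisymmetric: from $\{G_2,G_1\}_g=\tilde\delta_1 G_2-\tilde\delta_2 G_1-\{G_2,G_1\}$ one reads off that interchanging the two labels flips the overall sign, so $\{G_1,G_2\}_g=-\{G_2,G_1\}_g$. Second, the chosen representative $G_{[1,2]}$ generates $\delta_{[1,2]}\sim[\delta_1,\delta_2]$, and since $[\delta_2,\delta_1]=-[\delta_1,\delta_2]$ we may fix $G_{[2,1]}=-G_{[1,2]}$. Subtracting the two defining relations $\{G_2,G_1\}_g=G_{[1,2]}+K_{1,2}$ and $\{G_1,G_2\}_g=G_{[2,1]}+K_{2,1}$ then yields $K_{2,1}=-K_{1,2}$.

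For the cyclic identity the plan is to feed the decomposition $\{G_2,G_1\}_g=G_{[1,2]}+K_{1,2}$ into the Jacobiator of appendix~\ref{sec:gaugemodjacobi}, which gives, for three external symmetries,
\[
[[(G_1,\delta_1^\lambda),(G_2,\delta_2^\lambda)],(G_3,\delta_3^\lambda)]+cyclic=(G_J,\delta_J^\lambda),\qquad G_J\approx 0 .
\]
Using \eqref{eq:GFTbracket} and the bilinearity of the couple bracket, I would split the inner bracket as $(G_{[1,2]},\delta_{[1,2]}^\lambda)+(K_{1,2},0)$ — the $\lambda$-transformation attached to the $K$-piece is zero because $K_{1,2}$ depends only on $\zeta$ — and treat the two outer brackets separately. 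The first reproduces, by the definition of $K$ applied to the composite symmetry, the couple $(G_{[[1,2],3]}+K_{[1,2],3},\,\delta_{[[1,2],3]}^\lambda)$. The second, $[(K_{1,2},0),(G_3,\delta_3^\lambda)]$, has first component $\{G_3,K_{1,2}\}_g=\tilde\delta_{K_{1,2}}G_3-\tilde\delta_3 K_{1,2}-\{G_3,K_{1,2}\}$; since $\delta K_{1,2}/\delta z^A=0$ and $K_{1,2}$ carries no source transformation, both $\tilde\delta_{K_{1,2}}$ and the ordinary bracket drop out, leaving the single source term $-\delta^j_3 K_{1,2}$, i.e. the action of $\delta_3$ on the $\zeta$-dependence of $K_{1,2}$.

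The main obstacle, and the step I would handle most carefully, is the separation of the resulting identity into a bulk and a source part. Summing over cyclic permutations, the generators $G_{[[1,2],3]}$ assemble into the generator of $\sum_{cyclic}\delta_{[[1,2],3]}$, which for a generating set closes through the Jacobi identity of the structure constants, so their cyclic sum carries no genuine $\zeta$-dependence. By contrast $K_{[1,2],3}$ and $\delta^j_3 K_{1,2}$ are honest functionals of the unconstrained boundary fields $\zeta$, whereas $G_J\approx 0$ vanishes on the constraint surface; a nonzero function of the free sources cannot be weakly zero, so the source-dependent boundary part of the identity must cancel on its own. Extracting that part leaves precisely
\[
K_{[1,2],3}+\delta^j_3 K_{1,2}+cyclic=0 ,
\]
the overall sign being fixed by the conventions in $\{\cdot,\cdot\}_g$ and in \eqref{eq:GFTbracket}. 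I would close by remarking that, exactly as in the finite-number-of-degrees-of-freedom case of Section~\ref{sec:extsources}, this is the cocycle condition of an abelian extension built on the representation of $\mathcal{G}$ on the boundary fields $\zeta$.
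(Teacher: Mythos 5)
Your proposal is correct and follows the route the paper intends: antisymmetry of $\{\cdot,\cdot\}_g$ gives $K_{1,2}=-K_{2,1}$, and the cyclic identity comes from the Jacobiator computation of the appendix ($G_J\approx 0$) combined with the fact that $K_{1,2}$ is a boundary functional of $\zeta$ alone, exactly as in the explicit proof of the non-gauge analogue in Section 3. You are in fact more careful than the paper (which only says ``using the previous results'') on the one genuinely delicate point --- that the modified bracket satisfies Jacobi only weakly, so the $\zeta$-dependent part must be argued to vanish identically because a functional of the unconstrained boundary data cannot be merely weakly zero --- and your relative sign between $K_{[1,2],3}$ and $\delta_3 K_{1,2}$ is as defensible as the paper's own, which already differs between the statement and the proof of the Section 3 theorem (via $K_{[1,2],3}$ versus $K_{1,[2,3]}$).
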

\noindent As in the previous cases, the representation
of the algebra of external symmetries has room for abelian extensions.

\subsection{Link with the non-integrability of surface charge}
\label{sec:nonintegsurf}

The method usually used to define charges for improper gauge transformations is the
following (see \cite{troessaert_canonical_2013} and references therein): after having
chosen a set of boundary conditions for which the action is well defined, one
computes the set of gauge transformations preserving the boundary conditions
and then select the subset for which it is possible to define differentiable
generators. In order to make the link with the existing literature easier, 
we will only consider transformations of the
canonical variables that are independent of the lagrange multipliers. 

The integrability problem appears in the last step. One has to solve an integrability condition
to find the correct boundary term $k_\epsilon$:
\begin{gather}
	\delta \int_\Sigma d^nx\epsilon^a \phi_a = \int_\Sigma d^nx \,
	\frac{\delta \epsilon^a\phi_a}{\delta z^A}\delta z^A +
	\oint_{\d\Sigma} (d^{n-1}x)_i \Theta^i_\epsilon[\delta z],\\
	\label{eq:integintcond}
	\delta\oint_{\d\Sigma}(d^{n-1}x)_i k_\epsilon^i = -
	\oint_{\d\Sigma}(d^{n-1}x)_i \Theta^i_\epsilon[\delta z],
\end{gather}
where the equalities are valid for all variations $\delta$ preserving the boundary conditions.
If such a boundary term exists, the differentiable generator of the
transformation $\delta_\epsilon$ associated to the gauge parameter $\epsilon$
is then
\begin{equation}
	G_\epsilon = \int_\Sigma d^nx\epsilon^a \phi_a +
	\oint_{\d\Sigma}(d^{n-1}x)_i k_\epsilon^i.
\end{equation}
If this boundary term does not exists, the associated transformation is not
canonical and it cannot be regarded as a symmetry of the action. 

The usual way to solve this problem is to tighten the restrictions on the boundary.
This reduces the set of variations in \eqref{eq:integintcond} and may help to define a
suitable boundary term. However, doing this also reduces the
set of gauge transformations preserving the boundary conditions and perhaps
remove the transformations of interest. 

The notion of external
symmetry we introduced in this work brings another solution to this problem.
The main idea is that the boundary conditions on the symmetries and the
boundary conditions on the transformations used in the integrability
conditions are different. Symmetries must only preserve the external boundary
conditions and transform the boundary fields of the internal boundary
conditions in an appropriate way whereas the transformations used in the
variation condition \eqref{eq:gaugecoupleII} must preserve the stronger internal boundary conditions.


\section*{Acknowledgements}
\label{sec:acknowledgements}

\addcontentsline{toc}{section}{Acknowledgments}

I would like to thank G. Barnich and P. Ritter for
useful discussions. This work is founded by the fundecyt postdoctoral grant
3140125. The Centro de Estudios Cient\'ificos (CECs) is funded by the Chilean
Government through the Centers of Excellence Base Financing Program of
Conicyt.

\appendix

\section{Algebra for the poisson bracket with sources}
\label{sec:appsources}
In this appendix, we will show that $\left(\{G,F\}_j, [ \delta_F^j,
  \delta_G^j]\right)$ satisfies
\begin{equation}
\partial_t \{G,F\}_j + \left\{\{G,F\}_j,H \right\}_j=V_{\{G,F\}}.
\end{equation}

We start by using the Jacobi identity on the second term:
\begin{eqnarray}
\left\{\{G,F\}_j,H \right\}_j & = & \left\{\{H,F\}_j,G
\right\}_j+\left\{F,\{H,G\}_j \right\}_j\\
 & =&  \left\{\partial_t F - V_F,G
\right\}_j+\left\{F,\partial_t G -V_G\right \}_j\label{eq:app1}
\end{eqnarray}
Using the fact that both $(F, \delta^j_F)$ and $(G, \delta^j_G)$ are
generators of symmetry. Using equation (\ref{eq:commut}), we see that
$\partial_t F-V_F$ is associated to the following operator
acting on the sources:
\begin{eqnarray}
\delta_{\partial_t F}^j j^\alpha& = & \left[ \delta^j_F,\delta^j_H
\right]j^\alpha\\
&=& \delta_F^j d_t j^\alpha-\delta^j_t J^\alpha_F \\
&=& \partial_t J^\alpha_F.
\end{eqnarray}
Now, we can expend equation (\ref{eq:app1}) to
\begin{eqnarray}
\left\{\{G,F\}_j,H \right\}_j & =&  \left\{\partial_t F - V_F ,G
\right\}+ \delta^j_G \left(\partial_t F -V_F\right)- \delta^j_{\partial_t F}
G\nonumber \\ && \quad +\left\{F,\partial_t G -V_G\right\} + \delta^j_{\partial_t G} F-
\delta^j_F \left(\partial_t G -V_G \right) \nonumber\\
& = & \partial_t \left\{F,G
\right\}+ \delta^j_G \left(\partial_t F -V_F\right) + \delta^j_{\partial_t G} F-
\delta^j_F \left(\partial_t G-V_G\right) - \delta^j_{\partial_t F}
G\nonumber\\
& = & \partial_t \left\{F,G
\right\}+ \partial_t\delta^j_G F -
\partial_t\delta^j_F G - \delta^j_G V_F + \delta^j_F V_G\nonumber\\
&=& \partial_t \left\{F,G \right\}_j + \left(\delta^j_F V_G- \delta^j_G V_F \right),
\end{eqnarray}
which is what we wanted with $V_{\{G,F\}} = \delta^j_F V_G- \delta^j_G V_F$. Between the second and the third line, we
used the following identity 
\begin{equation}
\left[\partial_t, \delta_G^j \right] = \left[d_t - \delta^j_t -
  \dot z^A \frac{\partial}{\partial z^A} , \delta_G^j\right]
=  \left[- \delta^j_t, \delta_G^j\right]
=\delta^j_{\d_t G}.
\end{equation}

\section{Variation of the modified brackets}
\label{sec:diffbrack}

In this appendix, we will study the behavior of our new brackets for field
theories. This analysis applies to both the poisson bracket taking into
account the boundary conditions defined in section \ref{sec:bound-cond-as} and the bracket of
conserved quantities for gauge field theories defined in section \ref{sec:GFT}.

Both bracket can be written as
\begin{equation}
	\left\{F, G\right\}^{mod} = \tilde\delta_G F - \tilde\delta_F G - \left\{F,
	G\right\},
\end{equation}
where the variations act on all the fields. The definition of
$\tilde\delta_{F,G}$ is given in section \ref{sec:GFT}. In the non-gauge theory case, will assume that $F$ and $G$
are generalized differentiable functionals. In the gauge theory case, we will
consider two couples $(F, \delta^\lambda_F)$ and $(G, \delta^\lambda_G)$
satisfying \eqref{eq:gaugecoupleII} and generating external symmetries of the action. If $\delta$ is a variation
preserving the internal boundary conditions then $[\delta, \tilde\delta_G]$ also preserves
them. We have
\begin{eqnarray}
	\delta\tilde\delta_G F & = & [\delta, \tilde\delta_G]F + \tilde\delta_G
	\delta
	F\nonumber\\
	& = & \delta Z_G^A \frac{\delta F}{\delta z^A} + \sum_{l=0}\delta
	\delta_t^l\Lambda_G^a
	\frac{\delta F}{\delta \overset{(l)}{\lambda^a}} + \delta z^A \tilde\delta_G \frac{\delta
	F}{\delta z^A} + \sum_{l=0}\delta \overset{(l)}{\lambda^a} \tilde\delta_G\frac{\delta F}{\delta
	\overset{(l)}{\lambda^a}}.
\end{eqnarray}
In the non-gauge theory case, the terms containing $\lambda$ are absent.
Commuting $\tilde\delta_G$ and $\frac{\delta}{\delta z^A}$, we get
\begin{equation}
	\tilde\delta_G \frac{\delta F}{\delta z^A} = \frac{\delta}{\delta
	z^A}\tilde\delta_G F - \delta z^A (-\d_i) \left(\frac{\d Z^B_G}{\d
		z^A_{(i)}} \frac{\delta F}{\delta z^B} + \sum_{l=0} \frac{\d
		\delta_t^i \Lambda^b_G}{\d z^A_{(i)}} \frac{\delta F}{\delta
	\overset{(l)}{\lambda^b}}\right).
\end{equation}
Using the fact that we imposed the constraints and all their derivatives to be
zero on the boundary and corollary \ref{theo:coroll}, we also have
\begin{equation}
	\delta \delta^l_t \Lambda^a_G \frac{\delta F}{\delta
	\overset{(l)}{\lambda^a}} = \delta z^A (-\d_{(i)}) \left(\frac{\d
		\delta^l_t \Lambda^a_G}{\d z^A_{(i)}} \frac{\delta F}{\delta
	\overset{(l)}{\lambda^a}}\right)+ \sum_{m=0} \delta
	\overset{(m)}{\lambda^b} (-\d_{(i)}) \left(\frac{\d
		\delta^l_t \Lambda^a_G}{\d \overset{(m)}{\lambda^b_{(i)}}} \frac{\delta F}{\delta
	\overset{(l)}{\lambda^a}}\right).
\end{equation}
Lastly, the
variation of the non-modified poisson bracket can be written as:
\begin{equation}
	\delta \{F, G\} = \delta Z^A_G \frac{\delta F}{\delta z^A} - \delta
	Z^A_F \frac{\delta G}{\delta z^A}.
\end{equation}
Combining everything, we get
\begin{multline}
	\delta \{F, G\}^{mod} = \delta z^A \frac{\delta}{\delta z^A}
	\left( \tilde \delta_GF - \tilde \delta_FG - \{F,G\}\right)\\ 
	\qquad + \sum_{l,m=0} \delta
	\overset{(m)}{\lambda^b} (-\d)_{(i)} \left(\frac{\d
		\delta^l_t \Lambda^a_G}{\d \overset{(m)}{\lambda^b_{(i)}}} \frac{\delta F}{\delta
	\overset{(l)}{\lambda^a}} - \frac{\d
		\delta^l_t \Lambda^a_F}{\d \overset{(m)}{\lambda^b_{(i)}}} \frac{\delta G}{\delta
	\overset{(l)}{\lambda^a}}\right).
\end{multline}

\section{Jacobi identity of the modified bracket}
\label{sec:jacobiMB}

In this appendix, we will use objects of the bi-variational
formalism. We will follow the definitions, notations and properties introduced in 
appendix A of \cite{barnich_surface_2008}.

To prove the Jacobi identity, we will use the following expression
 for the modified poisson bracket of the n-forms $\hat F$ and $\hat G$:
\begin{equation}
\left \{ \hat F, \hat G\right \}_{\zeta}= \delta_G \hat F  - d_H I^n_F \hat G.
\end{equation}
The corresponding bracket of the associated functionals $F = \int \hat F$ and
$G = \int \hat G$ is
\begin{equation}
\left \{ F, G\right \}_{\zeta} = \int_\Sigma\left \{ \hat F, \hat G\right
\}_{\zeta}.
\end{equation}
We can rewrite the Jacobi identity as
\begin{eqnarray}
\left\{\hat F,\left\{\hat G,\hat J\right\}_{\zeta} \right\}_{\zeta} + cycl
&=&\left\{\left\{ \hat J, \hat G \right\} _{\zeta},\hat  F \right\}_{\zeta} -\left\{\left\{
   \hat  J,\hat F\right\}_{\zeta} ,\hat  G\right\}_{\zeta}\nonumber \\ && \quad
-\left\{ \hat J,\left\{\hat  G ,\hat F \right\}_{\zeta}
\right\}_{\zeta}\\ &=&\delta_F\left\{\hat  J ,\hat G \right\}_{\zeta} -d_H I^n_{\{J ,G\}}
\hat F \nonumber \\ && \quad-\delta_G\left\{\hat J,\hat F \right\}_{\zeta} + d_H I^n_{\{ J ,F \}}
\hat G\nonumber \\ && \quad - \delta_{\{G,F\}} \hat J + d_H I^n_J
\left\{\hat  G,\hat F \right\}_{\zeta}\\ 
&=&\delta_F\left( \delta_G \hat J  -d_H I^n_J \hat G \right) -d_H I^n_{\{ J ,G \}}
\hat F\nonumber \\ && \quad-\delta_G\left( \delta_F \hat J  -d_H I^n_J \hat F\right) + d_H I^n_{\{ J ,F \}}
\hat G\nonumber \\ && \quad - \delta_{\{G,F\}} \hat J + d_H I^n_J \left( \delta_F \hat G -d_H I^n_G \hat F\right)\nonumber\\
&=&(\delta_F \delta_G -\delta_G\delta_F  - \delta_{\{G,F\}}  )\hat J 
\nonumber \\ && \quad+d_H\left\{- \delta_F I^n_J \hat G- I^n_{\{ J,G \}}
\hat F +\delta_G I^n_J \hat F\right.\nonumber\\ && \qquad \left.+  I^n_{\{ J ,F \}}
\hat G+  I^n_J \left( \delta_F \hat G -d_H I^n_G \hat F\right) \right\}.
\end{eqnarray}
From (\ref{eq:commutvar}), we see that the first line gives zero, we are left with the boundary
term. It can be simplified to
\begin{eqnarray}
Jacobi & = & d_H\left\{- \delta_F I^n_J \hat G- I^n_{[ G,J ]}
\hat F +\delta_G I^n_J \hat F\right.\nonumber\\ && \qquad \left.+  I^n_{[ F,J ]}
\hat G +  I^n_J \delta_F \hat G  - \delta_J I^n_G \hat F \right\}\\
& = & d_H\left\{- \left[\delta_F, I^n_J\right] \hat G  +  I^n_{[ F,J ]}
\hat G \right.\nonumber\\ && \qquad +\left[\delta_G, I^n_J\right] \hat F  - I^n_{[ G,J ]}
\hat F \nonumber\\ && \qquad \left. + I^n_J \delta_G \hat F- \delta_J I^n_G \hat F\right\}.
\end{eqnarray}
Applying twice equations (A.40) of \cite{barnich_surface_2008}, we obtain
\begin{equation}
	\label{eq:appjacobiIII}
Jacobi  =  d_H\left\{T_F \left[J, \frac{\delta \hat G }{\delta z}
  \right] -T_G \left[J, \frac{\delta \hat F}{\delta z} \right]+ I^n_J
  \delta_G \hat F - \delta_J I^n_G \hat F \right\}.
\end{equation}
We have two useful properties of these $T$:
\begin{eqnarray}
\label{propT1-BH}
T_{Q_1} \left[Q_2, \frac{\delta \omega^n}{\delta z}
  \right]&=& -W_{\frac{\delta \omega^n}{\delta z}} \left[Q_2,Q_1
  \right] + I^n_{Q_2} \left( Q_1^B \frac{\delta \omega^n}{\delta
      z^B}\right),\\
\label{propT2-BH}
T_F \left[G, \frac{\delta \hat J}{\delta z}
  \right]&=&-W_{\frac{\delta \hat F}{\delta z}} \left[G,J
  \right].
\end{eqnarray}
The first one is the equation (A.52) of \cite{barnich_surface_2008}. The
second one is just coming from the definition of $T$ and the
properties of hamiltonian
generators:
\begin{eqnarray}
T_F \left[G, \frac{\delta \hat J}{\delta z}
  \right]&=& \left(\begin{array}{c} \vert \mu \vert + \vert \rho \vert
      + 1 \\ \vert \mu \vert + 1 \end{array}\right) \partial_{(\mu)}
  \left(G^A (-\partial)_{(\rho)} \left( \frac{\partial^S F^B}{\partial
      z^A_{(\mu)(\rho)\nu}} \frac{\partial}{\partial d x^\nu}
    \frac{\delta \hat J}{ \delta z^B}\right)\right)\nonumber\\
&=& \left(\begin{array}{c} \vert \mu \vert + \vert \rho \vert
      + 1 \\ \vert \mu \vert + 1 \end{array}\right) \partial_{(\mu)}
  \left(G^A (-\partial)_{(\rho)} \left( \sigma^{BC}\frac{\partial^S }{\partial
      z^A_{(\mu)(\rho)\nu}} \frac{\delta F }{ \delta z^C}
    \frac{\delta J }{ \delta z^B} d^{n-1}_\nu
    x\right)\right)\nonumber\\
&=& -\left(\begin{array}{c} \vert \mu \vert + \vert \rho \vert
      + 1 \\ \vert \mu \vert + 1 \end{array}\right) \partial_{(\mu)}
  \left(G^A (-\partial)_{(\rho)} \left( \sigma^{CB}\frac{\delta J }{ \delta z^B} \frac{\partial^S }{\partial
      z^A_{(\mu)(\rho)\nu}} \frac{\partial}{\partial d
      x^\nu}\frac{\delta (Fd^nx) }{ \delta
      z^C}\right)\right)\nonumber\\
&=& -\left(\begin{array}{c} \vert \mu \vert + \vert \rho \vert
      + 1 \\ \vert \mu \vert + 1 \end{array}\right) \partial_{(\mu)}
  \left(G^A (-\partial)_{(\rho)} \left( J^C\frac{\partial^S }{\partial
      z^A_{(\mu)(\rho)\nu}} \frac{\partial}{\partial d x^\nu}\frac{\delta \hat F }{ \delta z^C}\right)\right)\nonumber\\
&=&-W_{\frac{\delta \hat F}{\delta z}} \left[G,J
  \right],
\end{eqnarray}
using equation (A.51) of \cite{barnich_surface_2008}. The next step is to
apply (\ref{propT1-BH}) to the second term of \eqref{eq:appjacobiIII} 
and  (\ref{propT2-BH}) to the first term :
\begin{eqnarray}
Jacobi  & =  & d_H\left\{-W_{\frac{\delta \hat F }{\delta z}} \left[J,G
  \right]+W_{\frac{\delta \hat F}{\delta z}} \left[J,G
  \right] -I_J^n \left( G^A \frac{\delta \hat F }{\delta z^A}\right)
+ I^n_J \delta_G \hat F- \delta_J I^n_G \hat F\right\}\nonumber\\
& =  & d_H\left\{-I_J^n \left( \delta_G \hat F  -d_H I_G \hat F\right)
+ I^n_J \delta_G \hat F- \delta_J I^n_G \hat F\right\}\nonumber\\
&=&0.
\end{eqnarray}

\section{Jacobi identity for gauge field theories}
\label{sec:gaugemodjacobi}

This appendix is devoted to the computation of the cyclic identity of the
modified bracket of conserved quantities in section \ref{sec:GFT}. The form we
will use is
\begin{equation}
	\{F, G\}_g = \{F, G\}_\zeta +
	\tilde\delta^\lambda_GF-\tilde\delta^\lambda_FG.
\end{equation}

Let's first prove a useful identity:
\begin{multline}
	\label{eq:applemma}
		\tilde \delta^\lambda\{F, G\}_\zeta = \{\tilde \delta^\lambda
		F, G\}_\zeta + \{F, \tilde \delta^\lambda G\}_\zeta
		+ \sum_{l=0}\d_{(i)} \left[\delta^z_F(\delta_t^l\Lambda^a)
			\frac{\delta G}{\delta
		\overset{(l)}{\lambda^a}_{(i)}}-\delta^z_G(\delta_t^l\Lambda^a)
			\frac{\delta F}{\delta
		\overset{(l)}{\lambda^a}_{(i)}}\right],
\end{multline}
where $\delta^\lambda \lambda^a = \Lambda^a$.
\begin{proof}
	We can write
	\begin{equation}
		\{F, G\}_\zeta= \delta^z_GF - \delta^z_FG -\{F, G\}.
	\end{equation}
	Using
	\begin{eqnarray}
		\tilde\delta^\lambda\delta^z_GF &=& \delta^z_G\tilde\delta^\lambda F +
		\d_{(i)} \left[\tilde \delta^\lambda(\sigma^{AB}\frac{\delta
			G}{\delta z^B}) \frac{\delta F}{\delta z^A_{(i)}} +
				\sum_{l=0}(-\delta^z_G\delta_t^l\Lambda^a)
				\frac{\delta F}{\delta
			\overset{(l)}{\lambda^a}_{(i)}}\right]\nonumber\\
		&=& \delta^z_G\tilde\delta^\lambda F +
		\d_{(i)} \left[\sigma^{AB}\frac{\delta
			}{\delta z^B}(\tilde\delta^\lambda G) \frac{\delta F}{\delta
		z^A_{(i)}}-\sum_{l=0}\delta^z_G(\delta_t^l\Lambda^a)
			\frac{\delta F}{\delta
			\overset{(l)}{\lambda^a}_{(i)}}\right] \\ && 
			\quad +
			\d_{(i)} \left[-\sum_{l=0}\sigma^{AB}(-\d)_{(j)}
				\left(\frac{\d \delta^l_t\Lambda^a}{\d
					z^B_{(j)}}\frac{\delta G}{\delta
				\overset{(l)}{\lambda^a}}\right)
			\frac{\delta F}{\delta z^A_{(i)}}\right],\nonumber
	\end{eqnarray}
	with
	\begin{equation}
		\tilde \delta^\lambda\frac{\delta F}{\delta z^A}
		\sigma^{AB}\frac{\delta G}{\delta z^B} = 
		\frac{\delta }{\delta z^A}(\tilde \delta^\lambda F)
		\sigma^{AB}\frac{\delta G}{\delta z^B}- \sum_{l=0} (-\d)_{(i)}
		\left(\frac{\d \delta_t^l \Lambda^a}{\d z^A_{(i)}}
			\frac{\delta F}{\delta
		\overset{(l)}{\lambda^a}}\right) \sigma^{AB}\frac{\delta
		G}{\delta z^B},
	\end{equation}
	we get
	\begin{eqnarray}
		\tilde \delta^\lambda\{F, G\}_\zeta & = & \delta^z_G
		\tilde\delta^\lambda F + \delta^z_{\tilde\delta^\lambda G} F  
		- \{\tilde
		\delta^\lambda F,G\} \nonumber \\
		&& -\d_{(i)}\d_k \left[\sum_{l=0}\sigma^{AB}(-\d)_{(j)}
				\left(\frac{\d \delta^l_t\Lambda^a}{\d
					z^B_{(j)}}\frac{\delta G}{\delta
				\overset{(l)}{\lambda^a}}\right)
			\frac{\delta F}{\delta z^A_{(i)k}}\right]\nonumber\\
		&& - \d_{(i)} \left[\sum_{l=0}\delta^z_G(\delta_t^l\Lambda^a)
			\frac{\delta F}{\delta
		\overset{(l)}{\lambda^a}_{(i)}}\right] - (F\leftrightarrow G).
	\end{eqnarray}
	Using the fact that the constraints and all their derivatives are zero
	on the boundary along with corollary \ref{theo:coroll}, this becomes
	equation \eqref{eq:applemma}.
\end{proof}

The cyclic identity we want to compute is between couples $(G_n,
\delta^\lambda_n)$. Let's define
\begin{equation}
	[[(G_1,\delta^\lambda_1),(G_2,\delta^\lambda_1)],(G_3,\delta^\lambda_1)]
	+ cyclic
	= (G_J,\delta^\lambda_J),
\end{equation}
where the bracket between couples is defined in equation \eqref{eq:GFTbracket}.
Using \eqref{eq:applemma} and the Jacobi identity proven in appendix 
\ref{sec:jacobiMB}, we obtain
\begin{eqnarray}
	G_J &=& \tilde\delta^\lambda_{[1,2]}G_3 - \tilde\delta^\lambda_3 \{G_2, G_1\}_\zeta
	- \tilde\delta^\lambda_1\tilde\delta^\lambda_2G_3 + \tilde\delta^\lambda_2 \tilde\delta^\lambda_1G_3 \nonumber \\
	&& \quad+
	\{G_2, \tilde\delta^\lambda_3G_1\}_\zeta +\{\tilde\delta^\lambda_3G_2, G_1 
	\}_\zeta + cyclic\nonumber\\
	&=&\sum_{l=0}\d_{(i)} \left[\left(\delta^l_t
	\Lambda^a_{[1,2]}+\tilde\delta_2(\delta_t^l\Lambda_1^a)
-\tilde\delta_1(\delta_t^l\Lambda_2^a)\right)
			\frac{\delta G_3}{\delta
		\overset{(l)}{\lambda^a}_{(i)}}\right]+ cyclic\nonumber\\
		&=&\sum_{l=0}\d_{(i)} \left[\left([\tilde\delta_2,\delta_t^l]\Lambda_1^a
	-[\tilde\delta_1,\delta_t^l]\Lambda_2^a\right)
			\frac{\delta G_3}{\delta
		\overset{(l)}{\lambda^a}_{(i)}}\right]+ cyclic
\end{eqnarray}

Playing with commutation relations, we obtain:
\begin{eqnarray}
	[\delta_t, \tilde\delta_2] \Lambda & = &\sigma^{AB}\d_{(i)}\left( (\delta_t
	\frac{\delta G_2}{\delta z^B} - \tilde \delta_2
\frac{\delta H}{\delta z^B})\frac{\delta \Lambda}{\delta z^A_{(i)}}\right)\\
&=& \d_{(i)} \left[\sigma^{AB}\frac{\delta}{\delta z^B}\left(\delta_tG_2 -
\tilde\delta_2H - \{G_2, H\}\right)\frac{\delta \Lambda}{\delta z^A_{(i)}}\right]\nonumber\\
&& -\d_{(i)}\left[ (-\d)_{(j)}\left(\frac{\d \Lambda^a_2}{\d
z^B_{(j)}}\phi_a\right)\sigma^{BA}\frac{\delta \Lambda}{\delta z^A_{(i)}}\right]\\
&=& -\d_{(i)}\left[ (-\d)_{(j)}\left(\frac{\d \Lambda^a_2}{\d
z^B_{(j)}}\phi_a\right)\sigma^{BA}\frac{\delta \Lambda}{\delta
z^A_{(i)}}\right],
\end{eqnarray}
where we used the fact that $(G_2, \delta^\lambda_2)$ generates an external
symmetry. This means that, on the constraint surface, we have:
\begin{equation}
	G_J \approx 0.
\end{equation}

Another useful result is that, if $(G_3, \delta^\lambda_3)$ is the hamiltonian
$(H, \delta_H^\lambda)$, we have
\begin{equation}
	G_J = 0.
\end{equation}

\newpage

\bibliography{bibli}

\end{document}